\newtheorem{thm}{Theorem}[section]
\newtheorem{cor}[thm]{Corollary}
\newtheorem{lem}[thm]{Lemma}
\newtheorem{prop}[thm]{Proposition}
\theoremstyle{definition}
\newtheorem{defn}[thm]{Definition}
\newtheorem{rem}[thm]{Remark}
\newtheorem{thmA}{Theorem}
\DeclareMathOperator{\id}{id}
\DeclareMathOperator{\rk}{rk}
\DeclareMathOperator{\ran}{ran}
\DeclareMathOperator{\spa}{span}
\DeclareMathOperator{\tr}{tr}
\DeclareMathOperator{\sgn}{sgn}
\DeclareMathOperator{\Rep}{Rep}
\DeclareMathOperator{\mat}{mat}
\newcommand*{\mtxc}[1]{\operatorname{M}_{#1}(\C)}
\newcommand*{\mtxr}[1]{\operatorname{M}_{#1}(\R)}
\newcommand*{\glr}[1]{\operatorname{GL}_{#1}(\R)}
\newcommand*{\uc}[1]{\operatorname{U}_{#1}(\C)}
\newcommand{\bra}[1]{\mathinner{\langle #1|}}
\newcommand{\ket}[1]{\mathinner{|#1\rangle}}
\newcommand{\braket}[2]{\mathinner{\langle #1|#2\rangle}}
\def\C{\mathbb C}
\def\R{\mathbb R}
\def\N{\mathbb N}
\def\cA{\mathcal A}
\def\bA{\mathbf A}
\def\bB{\mathbf B}
\def\bC{\mathbf C}
\def\cH{\mathcal H}
\def\cK{\mathcal K}
\def\cM{\mathcal M}
\def\cN{\mathcal N}
\def\cS{\mathcal S}
\def\Cs{{\rm C}^*}
\def\wt{\widetilde}
\def\fP{\mathfrak{P}}
\def\fQ{\mathfrak{Q}}
\def\aux{{\rm aux}}
\def\ti{{\rm t}}
\def\ve{\varepsilon}
\title[Constant-sized self-tests]{Constant-sized self-tests for maximally entangled states and single local projective measurements}
\author{Jurij Vol\v{c}i\v{c}}
\address{Department of Mathematics, Drexel University, Pennsylvania}
\email{jurij.volcic@drexel.edu}
\thanks{Supported by the NSF grant DMS-1954709.}
\keywords{Self-test, maximally entangled state, device-independent certification, non-locality, projective quantum measurement}
\begin{document}
\maketitle

\begin{abstract}
Self-testing is a powerful certification of quantum systems relying on measured, classical statistics. This paper considers self-testing in bipartite Bell scenarios with small number of inputs and outputs, but with quantum states and measurements of arbitrarily large dimension. The contributions are twofold.
Firstly, it is shown that every maximally entangled state can be self-tested with four binary measurements per party. This result extends the earlier work of Man\v{c}inska-Prakash-Schafhauser (2021), which applies to maximally entangled states of odd dimensions only.
Secondly, it is shown that every single local binary projective measurement can be self-tested with five binary measurements per party. A similar statement holds for self-testing of local projective measurements with more than two outputs.
These results are enabled by the representation theory of quadruples of projections that add to a scalar multiple of the identity. Structure of irreducible representations, analysis of their spectral features and post-hoc self-testing are the primary methods for constructing the new self-tests with small number of inputs and outputs.
\end{abstract}

\section{Introduction}

Thanks to non-locality of quantum theory, unknown non-communicating quantum devices measuring an unknown shared entangled state can sometimes be identified based on classical statistic of their outputs. This phenomenon is called {\it self-testing}, and is the strongest form of device-independent certification of quantum systems. Self-testing was introduced in \cite{mayers}, and has been a heavily studied subject ever since; see \cite{supic} for a comprehensive review of major advances on this topic. The immense interest attracted by self-testing originates from its applications in device-independent quantum cryptography \cite{acin07,faleiro}, delegated quantum computation \cite{col19}, randomness generation \cite{miller16,bamps18}, entanglement detection \cite{bowles18}, and computational complexity \cite{fitzsimons,CECfalse}. For experimental developments, see \cite{loopholefree15,loopholefree23}.

This paper focuses on self-testing in bipartite Bell scenarios \cite{bell_review}, where two parties randomly perform measurements on a shared quantum state without communicating. 
From these measurements, joint probability distribution of inputs and outputs of both parties can be constructed as classical data describing the system. Suppose that each party can perform $N$ measurements, each of them with $K$ outcomes. Borrowing terminology from quantum games, we model this setup with bipartite quantum \emph{strategies}. Namely, an $N$-input $K$-output strategy $\cS$ of two parties (subsystems) A and B consists of a bipartite quantum state $\ket{\psi}$ in the tensor product of Hilbert spaces $\cH_A$ and $\cH_B$, a measurement $(\cM_{i,a})_{a=1}^K$ of positive operators on $\cH_A$ for each $i=1,\dots,N$, and a measurement $(\cN_{j,b})_{b=1}^K$ of positive operators on $\cH_B$ for each $j=1,\dots,N$.
The \emph{correlation} of $\cS$ is the array $p$ of probabilities given by the Born rule $p(a,b|i,j)=\bra{\psi}\cM_{i,a}\otimes \cN_{j,b}\ket{\psi}$, and is the classically observable data induced by $\cS$.
There are two trivial modifications of the strategy $\cS$ that do not affect its correlation: one is a unitary change of local bases, and the other is extending the state with an ancillary state on which the measurements act trivially. If any other strategy with correlation $p$ is obtained from $\cS$ using these trivial modifications, then we say that $\cS$ is \emph{self-tested} by $p$.
That is, the state and measurements in a self-tested strategy are essentially uniquely determined by the correlation.
The most renowned example of a self-tested strategy (with 2 inputs and 2 outputs) consists of maximally entangled qubits and two pairs of Pauli measurements, which give the maximal quantum violation of the famous CHSH inequality \cite{chsh69,tsirelson,mayers}.

The following is a fundamental self-testing problem: 

\vspace{1ex}\noindent ($\star$)\emph{
Which states and which measurements can be self-tested, i.e., appear in a strategy that is self-tested by its correlation? Furthermore, how complex is such a strategy, e.g., how many inputs and outputs per party are required?
}\vspace{1ex}

The breakthrough on ($\star$) for quantum states was achieved in \cite{CGS}, where the authors showed that every entangled bipartite state can be self-tested. The number of inputs in the provided self-tests \emph{grows with the local dimension $n$} of the quantum state under investigation, which makes these self-tests rather complicated in large dimensions. The existence result of \cite{CGS} was later not only extended to multipartite states in quantum networks \cite{networks23} and refined in one-sided device-independent scenarios \cite{sarkar23}, but also improved in terms of inputs and outputs needed to self-test certain states.
In \cite{sarkar21}, the authors show that an $n$-dimensional maximally entangled bipartite state can be self-tested using $2$ inputs and $n$ outputs. 
The paper \cite{fu} was the first to provide \emph{constant-sized} self-tests for some infinite families of maximally entangled states of \emph{even} dimension (but not constant-sized self-tests for all maximally entangled states of even dimension).
This result was complemented by \cite{mancinska}, where the authors establish that maximally entangled state of \emph{any odd} dimension can be self-tested using 4 inputs and 2 outputs.

In comparison with states, the progress on ($\star$) for measurements has been more constrained. All two-dimensional projective measurements have been self-tested \cite{yang13}, and likewise tensor products of Pauli measurements \cite{mckague,coladangelo}. Recently, it has been established that every projective measurement can be self-tested \cite{CMV}.
Actually, the self-tests derived in \cite{CMV} allow for arbitrary real ensembles of projective measurements to be self-tested simultaneously.
However, self-testing an $n$-dimensional projective measurement in this manner requires roughly $n^2$ inputs.

\subsection*{Contributions}

This paper provides self-tests for \emph{all maximally entangled states} and \emph{all single local projective measurements}, respectively, that are \emph{uniform} in number of both inputs and outputs.
The first main result concerns maximally entangled states.

\begin{thmA}[Corollary \ref{c:st4}]\label{ta}
Maximally entangled bipartite state of any local dimension $d$ can be self-tested using 4 inputs and 2 outputs.
\end{thmA}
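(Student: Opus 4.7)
My plan is to produce an explicit 4-input 2-output correlation $p^\star$ realized by a canonical strategy built on $\ket{\psi_d} = d^{-1/2}\sum_{k=1}^d \ket{k}\ket{k}$ and four binary projective measurements per party whose effects satisfy a scalar-sum relation, and then to prove that any strategy reproducing $p^\star$ is equivalent, via a local isometry and an ancilla, to the canonical one. The driving algebraic object is the well-studied family of quadruples of projections $P_1,\ldots,P_4$ with $P_1+P_2+P_3+P_4 = \lambda\, I$: its irreducible $\ast$-representations have dimension at most two and their spectral invariants are well understood, which provides a very restrictive decomposition of Alice's operator algebra in any alternative strategy.

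Concretely, I would proceed in four steps. First, design projections $P_1,\ldots,P_4 \in \mtxc{d}$ summing to $\lambda I_d$ for a carefully chosen scalar $\lambda$ (possibly depending on $d$) so that they generate $\mtxc{d}$ and carry enough spectral diversity to be detected from bipartite correlations; take Bob's measurements to be the entrywise conjugates of Alice's in the Schmidt basis, exploiting the identity $(X\otimes I)\ket{\psi_d}=(I\otimes X^\top)\ket{\psi_d}$ to transport algebraic relations across the state. Second, compute $p^\star$ and establish a marginal-rigidity lemma: in any strategy realizing $p^\star$, Alice's observables $A_1,\ldots,A_4$ must sum to $\lambda I$ on the support of her reduced state, and symmetrically for Bob. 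Third, invoke the representation theory of scalar-sum projection quadruples to decompose the $\ast$-algebra generated by $A_1,\ldots,A_4$ into irreducible summands of dimension at most two. Fourth, use the off-diagonal entries of $p^\star$ (correlations mixing different Alice--Bob input pairs) to force that only the single irreducible type appearing in the canonical strategy can contribute, and with the correct multiplicities; this post-hoc self-testing step simultaneously identifies the state with $\ket{\psi_d}$ up to ancilla and pins down the local isometry.

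The principal obstacle, which is also what distinguishes this theorem from Man\v{c}inska--Prakash--Schafhauser, is the even-dimensional case. For odd $d$ one can arrange the spectral data so that no one-dimensional summand and no degenerate boundary two-dimensional irrep can appear in any valid decomposition, and the constructive count of multiplicities works out cleanly. When $d$ is even this is more delicate: at integer $\lambda$ the one-dimensional irreps correspond to indicator vectors of $\lambda$-element subsets of $\{1,2,3,4\}$, while generic two-dimensional irreps vary continuously in a spectral parameter that can conspire to match unwanted values, and parity obstructions prevent the odd-dimensional multiplicity argument from being copied verbatim. I expect the main technical work to lie in selecting $\lambda$ (possibly non-integer) and a finite spectral fingerprint readable from $p^\star$ that simultaneously excludes all unwanted isomorphism classes of summands and pins down the multiplicity of the desired two-dimensional irrep to produce a space of the correct even dimension. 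Once this fingerprint is secured, the four-step template above delivers the self-testing conclusion uniformly in $d$.
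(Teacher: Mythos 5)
Your high-level template (build measurements from a quadruple of projections summing to $\lambda I$, encode the linear relation in the correlation, decompose the induced $*$-algebra, and then do multiplicity counting to pin down the state) matches the paper's strategy, and you correctly identify that the difficulty versus Man\v{c}inska--Prakash--Schafhauser is dimension parity and the choice of $\lambda$. However, there is a fatal error in the central representation-theoretic input.

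You assert that the irreducible $*$-representations of the algebra of four projections summing to $\lambda I$ ``have dimension at most two.'' This is false, and the truth is precisely what makes the theorem work. The paper relies on the Kruglyak--Rabanovich--Samo\u{\i}lenko structure theory: for $\lambda = 2-\frac1n$ (which is the parameter actually used, and lies strictly between $1$ and $2$, never an integer for $n\ge 2$), the universal C*-algebra $\cA_\lambda$ has \emph{exactly four} unitarily inequivalent irreducible representations, \emph{each of dimension $n$}, related to one another by cyclic permutation of the four generators. It is this fact---a single $n$-dimensional block up to a small ambiguity, not a pile of $\le 2$-dimensional summands---that lets the argument certify $\ket{\phi_n}$ for arbitrary $n$. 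If the irreps really were bounded by dimension two, the decomposition of Alice's algebra could never produce an $n\times n$ irreducible block, and no correlation of this form could self-test $\ket{\phi_n}$ for $n>2$. Your ``dimension at most two'' intuition is correct for a \emph{pair} of projections (Halmos' theorem, which the paper does use, but in Section~5 for the opposite purpose of proving obstructions), not for a quadruple with a scalar-sum constraint.

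Relatedly, your discussion of the even/odd split is aimed at the wrong target. There are no ``one-dimensional irreps corresponding to indicator vectors of $\lambda$-element subsets'' in play, since $\lambda$ is chosen non-integer. The actual source of difficulty is that $\cA_{2-\frac1n}$ is \emph{not simple}: it has four $n$-dimensional irreps, so after using the correlation to force the linear relation $\sum_i P_i = (2-\frac1n)I$ and reading off the state via a largest-eigenvalue/Schur argument (the paper's Propositions~\ref{p:eigvec} and~\ref{l:eigvec}), one must still rule out the three unwanted cyclic-permutation irreps. The paper does this with a $4\times 4$ linear system in the squared ancilla norms whose coefficient matrix is the circulant of ranks $\rk\fP^{(n)}_i$, shown invertible in Lemma~\ref{l:invert}. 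By contrast, Man\v{c}inska et al.\ use $\lambda = 2-\frac2n$ for odd $n$, for which $\cA_\lambda \cong \mtxc{n}$ is simple and this counting step is unnecessary; that is the real reason their method does not reach even $n$. Your plan is missing both the correct description of the irreps and the concrete mechanism (the circulant rank matrix) that excludes the extraneous ones, so as written it would not close.
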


The strategies of Theorem \ref{ta} are given in Definition \ref{d:4in}. Their construction and self-testing feature arises from the one-parametric family of universal C*-algebras $\cA_{2-\frac1n}$ generated by four
projections adding up to $2-\frac1n$ times the identity.
Remarkable results about representations of these algebras were established by Kruglyak-Rabanovich-Samo\u{\i}lenko using Coxeter functors between representation categories \cite{kruglyak}.
Their theory is essential in the proof of Theorem \ref{ta}.
Representations of C*-algebras of this type have already been leveraged in \cite{mancinska}. However, their work uses a different family of parameters ($2-\frac2n$ for odd $n$, instead of $2-\frac1n$ for natural $n$) that leads to simple C*-algebras, and maximally entangled states of odd dimensions only. On the other hand, exploiting algebras $\cA_{2-\frac1n}$ for self-testing purposes requires a more sophisticated analysis of their representations, but applies to all maximally entangled states.

The second main result of this paper provides constant-sized self-tests for single local projective measurements with 2 outputs, i.e., \emph{binary} projective measurements. Note that a local binary projective measurement $(P,I-P)$ is, 
up to unitary change of local basis, given by a real matrix,
 and determined by the dimension $n$ and the rank $r$ of the projection $P$.

\begin{thmA}[Corollary \ref{c:st5}]\label{tb}
A single local binary projective measurement of any dimension $n$ and rank $r$ appears in a 5-input 2-output strategy that is self-tested by its correlation.
\end{thmA}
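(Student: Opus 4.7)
The plan is to promote the 4-input self-test of Theorem~\ref{ta} to a 5-input one that also rigidly fixes the appended measurement. Fix $(n,r)$ and a rank-$r$ projection $P\in\mtxc{n}$. Starting from a 4-input 2-output strategy $\cS_n$ as in Theorem~\ref{ta}, I may take the canonical party-B measurements $Q_1,\ldots,Q_4$ to be real symmetric, so that the maximally entangled state $\ket{\Omega_n}$ obeys the transpose-transfer identity $(X\otimes I)\ket{\Omega_n}=(I\otimes X^T)\ket{\Omega_n}$. I then enlarge $\cS_n$ to a 5-input strategy $\cS_n^{(P)}$ by taking the fifth party-A measurement to be $(P,I-P)$ and the fifth party-B measurement to be a suitable projection $R$, a natural first choice being $R=P^T$.

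Given an alternative strategy $\cT$ producing the correlation of $\cS_n^{(P)}$, its restriction to the first four inputs carries the correlation of $\cS_n$, so Theorem~\ref{ta} supplies local isometries $V_A,V_B$ and an auxiliary state aligning the restriction with $\cS_n$. Denote by $P'$ and $R'$ the images of $\cT$'s fifth measurements on A and B after this alignment. The correlations between the fifth input on one side and inputs $1,\ldots,4$ on the other side translate, via the transpose-transfer identity, into the Hilbert--Schmidt equations
\[
\tr(P'Q_j)=\tr(PQ_j),\qquad \tr(R'Q_j)=\tr(P^TQ_j)\qquad (j=1,\ldots,4);
\]
the marginals enforce $\tr(P')=\tr(R')=r$ so that $P',R'$ are rank-$r$ projections, and the $5$--$5$ correlation gives $\tr(P'(R')^T)=r$. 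Since two rank-$r$ projections of overlap $r$ must coincide, I read off $R'=(P')^T$.

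It remains to upgrade these constraints to $P'=P$, which is the step I expect to be the main obstacle: the four projections $Q_j$ span at most a 4-dimensional real subspace of $\mtxc{n}$, so the linear data above cannot determine $P'$ in general for large $n$. To close the gap I would combine the nonlinear projection constraint $(P')^2=P'$ with the detailed algebraic structure of the $Q_j$ coming from the irreducible representations of $\cA_{2-\frac1n}$, namely the Kruglyak--Rabanovich--Samo\u{\i}lenko machinery already exploited in Theorem~\ref{ta}. Writing $P'=P+X$ with $X$ Hilbert--Schmidt-orthogonal to $\spa\{Q_1,\ldots,Q_4\}$, idempotency together with the spectral rigidity of the $Q_j$ in an irreducible representation should force $X=0$. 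Should the transpose choice $R=P^T$ prove insufficient, $R$ can be upgraded to a more judicious rank-$s$ projection that complements the $Q_j$ so as to render the ensemble of constraints on $P'$ informationally complete within the variety of rank-$r$ projections; identifying such an $R$ is the technical crux of the argument and the locus of the post-hoc self-testing method advertised in the abstract.
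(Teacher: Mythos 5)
Your high-level plan---extend the 4-input self-test $\cS_n$ by a fifth measurement and invoke post-hoc self-testing---matches the paper's strategy, and you correctly pinpoint the obstacle: the four Hilbert--Schmidt numbers $\tr(P'Q_j)$ do not pin down an $n\times n$ projection $P'$. However, the way you set up the problem makes that obstacle fatal rather than surmountable, and you do not supply the idea that actually dissolves it.

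The genuine gap is twofold. First, you try to append an \emph{arbitrary} rank-$r$ projection $P$ (and a companion $R\approx P^T$) and then hope that the constraints force $P'=P$. This cannot work: the correlation of the extended strategy is a finite list of scalars, so for generic $P$ the extended strategy will simply not be self-tested. The theorem only claims \emph{some} rank-$r$ binary PVM appears in a self-tested 5-input strategy; since all rank-$r$ projections of fixed dimension are unitarily equivalent, the correct move is to \emph{construct a specific} rank-$r$ projection compatible with the rigid structure of $\cS_n$, not to start from a given $P$. Second, the mechanism by which the paper closes the gap is not ``idempotency plus spectral rigidity of the $Q_j$'' in the informal sense you gesture at, but a precise algebraic criterion (Proposition~\ref{p:chen}, from \cite{CMV}): it suffices that $2P-I$ lie in $\sgn\bigl(\glr{n}\cap\spa_\R\{I,Q_1,\dots,Q_4\}\bigr)$. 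The content of the proof is then to show this image contains rank-$r$ projections for every $r$, which is exactly what Proposition~\ref{p:spectrum} delivers: $\fP^{(n)}_3+\fP^{(n)}_4$ has $n$ simple eigenvalues, so shifting by the right scalar and taking $\sgn$ produces a projection of any prescribed rank, namely $\fQ^{(n,r)}$. Without this spectral input, the $\sgn$ criterion would not reach all ranks. Finally, note that the paper's $\cS_{n,r}$ adds the fifth input only on \emph{one} side (a $(5,4)$-input strategy); your symmetric addition of $R$ on the other side is unnecessary and the auxiliary constraint $\tr(P'(R')^T)=r$ you derive is never needed.
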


See Definition \ref{d:5in} for the explicit strategies used in Theorem \ref{tb}. A generalization of Theorem \ref{tb} for local non-binary projective measurements is given in Corollary \ref{c:pvm}.
It is important to stress both the significance and the limitation of Theorem {\ref{tb}}. Given a single projective measurement, Theorem {\ref{tb}} provides a small self-testing strategy that contains this measurement. Note that up to a choice of coordinate system, a given projective measurement always admits a real matrix presentation.
However, Theorem {\ref{tb}} does not address self-testing of \textit{ensembles} of projective measurements; from this perspective, it is weaker than \mbox{\cite{CMV}}, which provides (large) self-tests for all real ensembles of projective measurements.
The strategies of Theorem \ref{tb} are obtained from the strategies of Theorem \ref{ta} by the principle of \emph{post-hoc self-testing} \cite{supic}. A broad sufficiency criterion for applicability of post-hoc self-testing was presented in \cite{CMV}. To apply this criterion in the proof of Theorem \ref{tb}, certain spectral aspects of representations of $\cA_{2-\frac1n}$ need to be resolved. Namely, we determine the spectrum of the sum of pairs of projections arising from representations of $\cA_{2-\frac1n}$.

While the derivation of the newly presented self-tests might seem rather abstract, the resulting correlations admit closed-form expressions, and the corresponding strategies can recursively constructed using basic tools from linear algebra (see Appendix \ref{app} for examples).

\subsection*{Reader's guide}
Section {\ref{s:prelim}} reviews the standard terminology and notation on quantum strategies and self-testing. Section {\ref{s:proj}} presents a construction of four $n\times n$ projections that add to $2-\frac1n$ times identity, and their basic properties; these projections are central to this paper, and provide local projective measurements for the new self-tested strategies. Section {\ref{s:spec}} establishes certain spectral results about these projections, which are critical for demonstrating self-testing in this paper. While this section provides the main new mathematical insight into what is required to establish the new self-testing results, a reader only interested in main statements may skip this section. Section {\ref{s:st}} presents the new self-tested strategies and their correlations. Section {\ref{s:obs}} addresses obstructions to constant-sized self-testing of arbitrary entangled states and pairs of projective measurements.
Lastly, Appendix {\ref{app}} explicitly constructs the distinguished projections appearing in self-tests for local dimensions up to 6.

\subsection*{Acknowledgments}
The author thanks Ken Dykema for inspiring conversations about self-testing, and Ricardo Gutierrez-Jauregui for sharing his expertise on experimental aspects of quantum theory.

\section{Preliminaries}\label{s:prelim}

This section introduces notation and terminology on quantum strategies and self-testing, following the conventions presented in \cite{mancinska}. For a comprehensive overview, see \cite{supic}.

Let $K\in\N$. A $K$-tuple of operators $(P_a)_{a=1}^K$ acting on a Hilbert space $\cH$ is a \emph{positive operator-valued measure ($K$-POVM)} if $P_a\succeq0$ and $\sum_{a=1}^K P_a=I$.
If all $P_a$ are projections, then $(P_a)_{a=1}^K$ is a \emph{projection-valued measure ($K$-PVM)}, or a \emph{projective measurement}. Note that, up to a unitary basis change, a PVM $(P_a)_{a=1}^K$ is uniquely determined by the ranks $\rk P_a$ for $a=1,\dots, K$. That is, every $K$-PVM with ranks of projections $r_1,\dots,r_K$ is unitarily equivalent to
$$\left(
I_{r_1}\oplus 0_{r_2+\cdots+r_K},
0_{r_1}\oplus I_{r_2}\oplus 0_{r_3+\cdots+r_K},\dots,
0_{r_1+\cdots+r_{K-1}}\oplus I_{r_K}
\right).$$
A $2$-POVM is also called a \emph{binary} measurement. Observe that a binary PVM is simply a pair $(P,I-P)$ where $P$ is a projection, and is determined by the dimension and the rank of $P$ up to a unitary basis change.

A \emph{(pure bipartite) state} $\ket{\psi}$ is a unit vector in $\cH_A\otimes\cH_B$, where $\cH_A,\cH_B$ are Hilbert spaces. We say that $\ket{\psi}$ has \emph{full Schmidt rank} if $P\otimes I\ket{\psi}=I\otimes Q\ket{\psi}=0$ for some projections $P,Q$ implies $P=0$ and $Q=0$. In this case, the Hilbert spaces $\cH_A$ and $\cH_B$ are isomorphic.
For $n\in\N$, the (canonical) \emph{maximally entangled state} of local dimension $n$ is $\ket{\phi_n}=\frac{1}{\sqrt{n}}\sum_{i=1}^n \ket{i}\!\ket{i}\in\C^n\otimes\C^n$. For $A,B\in\mtxc{n}$,
$$\bra{\phi_n}A\otimes B\ket{\phi_n}=\tau(AB^\ti)=\frac{1}{n}\tr(AB^\ti),$$
where $\tau$ denotes the normalized trace on $\mtxc{n}$.

Let $K_A,K_B,N_A,N_B\in\N$. An \emph{$(N_A,N_B)$-input $(K_A,K_B)$-output bipartite quantum strategy} $\cS$ is a triple
$$\cS=\left(
\ket{\psi};\cM_1,\dots,\cM_{N_A};\cN_1,\dots,\cN_{N_B}
\right)$$
where $\cM_i$ are $K_A$-POVMs on a finite-dimensional Hilbert space $\cH_A$, $\cN_j$ are $K_B$-POVMs on a finite-dimensional Hilbert space $\cH_B$, and $\ket{\psi}\in\cH_A\otimes\cH_B$ is a state. When $K=K_A=K_B$ and $N=N_A=N_B$, we simply say that $\cS$ is a \emph{$N$-input $K$-output bipartite strategy}.
The \emph{correlation} of $\cS$ is the $N_A\times N_B\times K_A\times K_B$ array $p$ with entries
\begin{align*}
p(a,b|i,j) = \bra{\psi}\cM_{i,a}\otimes \cN_{j,b}\ket{\psi}
\qquad\qquad& 1\le a\le K_A,\ 1\le b\le K_B,\\
\qquad\qquad&1\le i\le N_A,\ 1\le j\le N_B.
\end{align*}
Since $\cS$ in particular models non-communication between parties,
the correlation $p$ is \emph{non-signalling}, meaning that
$p(a|i):=\sum_{b=1}^{K_B}p(a,b|i,j)$ and $p(b|j):=\sum_{a=1}^{K_A}p(a,b|i,j)$ are well-defined (the first sum is independent of $j$ and the second sum is independent of $i$).
A correlation $p$ is called \emph{synchronous} if $K_A=K_B$, $N_A=N_B$ and $p(a,b|i,i)=0$ for all $i$ and $a\neq b$. 

Let $\cS$ and $\wt\cS$ be $(N_A,N_B)$-input $(K_A,K_B)$-output strategies. Then $\wt\cS$ is a \emph{local dilation} if $\cS$ there exist finite-dimensional Hilbert spaces $\cK_A,\cK_B$, a state $\ket{\aux}\in\cK_A\otimes \cK_B$ and isometries $U_A:\cH_A\to\wt\cH_A\otimes \cK_A$ and $U_B:\cH_B\to\wt\cH_B\otimes \cK_B$ such that
\begin{equation}\label{e:locdil}
(U_A\otimes U_B)(\cM_{i,a}\otimes \cN_{j,b})\ket{\psi}=(\wt\cM_{i,a}\otimes \wt\cN_{j,b})\ket{\wt\psi}\otimes \ket{\aux}
\end{equation}
for all $a,b,i,j$.
There is a slight abuse of notation in \eqref{e:locdil}; namely, we identify $$(\wt\cH_A\otimes\cK_A)\otimes (\wt\cH_B\otimes\cK_B)\equiv
(\wt\cH_A\otimes\wt\cH_B)\otimes (\cK_A\otimes\cK_B).$$
Note that if $\wt\cS$ is a local dilation of $\cS$, then the correlations of $\cS$ and $\wt\cS$ coincide.
Finally, we say that a strategy $\wt\cS$ is \emph{self-tested} by its correlation if it is a local dilation of any other strategy with the same correlation.

\section{Quadruples of projections adding to a scalar multiple of the identity}\label{s:proj}

In \cite{kruglyak}, the authors derive several profound results on tuples of projections that add to a scalar multiple of the identity operator. This is achieved by studying certain functors between categories of their representations, which are also the cornerstone of this paper.
For our purposes, we focus on projections $P_1,P_2,P_3,P_4$ that add to $(2-\frac1n)I$, where $n$ is a natural number.
First we adopt the language of representations of C*-algebras, at least to the extent required in this paper.
Then we review the construction of the aforementioned functors from \cite[Section 1.2]{kruglyak}. Finally, we refine a part of \cite[Proposition 3]{kruglyak} to obtain further properties about the projections $P_i$ as above (Proposition \ref{p:ukr}). 

For $\alpha\in\R$ define the universal C*-algebra
$$\cA_\alpha=\Cs\left\langle x_1,x_2,x_3,x_4\colon x_i=x_i^*=x_i^2,\ x_1+x_2+x_3+x_4=\alpha\right\rangle,$$
and let $\Rep_\alpha$ denote the category of representations of $\cA_\alpha$. That is, objects of $\Rep_\alpha$ are representations of $\cA_\alpha$ on Hilbert spaces, and morphisms of $\Rep_\alpha$ are equivariant maps, i.e., bounded linear operators between Hilbert spaces that intertwine the actions of representations.
For a comprehensive source on C*-algebras and their representations, see \mbox{\cite{blackadar}}. While the above terminology offers a suitable mathematical framework for the technical steps in the proofs of this paper, let us extract the main meaning behind it, sufficient for comprehending the proofs. 
Without addressing precisely what a universal C*-algebra is, we can still say what its representations are. A representation $\pi$ of $\cA_\alpha$ is a quadruple of projections $X_1,X_2,X_3,X_4$ on a Hilbert space $\cH$ that satisfy $X_1+X_2+X_3+X_4=\alpha I$. Thus $\Rep_\alpha$ is foremost a collection of such quadruples; one could think of $\cA_\alpha$ as their abstract model. For a $\pi\in\Rep_\alpha$ as above we write $\pi(x_i)=X_i$, and we assign to it a $6$-tuple of numbers $[\pi]=(\alpha;n;d_1,d_2,d_3,d_4)$
where $n=\dim \cH$ and $d_i=\rk \pi(x_i)$, the dimension of the range of $X_i$ (if $\cH$ is infinite-dimensional, then $n=\infty$; likewise, $d_i$ can be infinite).

Note that representations may be related to each other in several ways. 
For example, let $\pi\in \cA_\alpha$ is given by projections $X_1,\dots,X_4$ on a Hilbert space $\cH$ and $\rho\in \cA_\alpha$ is given by projections $Y_1,\dots,Y_4$ on a Hilbert space $\cK$. Then the projections $X_1\oplus Y_1,\dots,X_4\oplus Y_4$ act on $\cH\oplus \cK$ and add to $\alpha$ times identity, so they determine representation of $\cA_\alpha$, called the \textit{direct sum} of $\pi$ and $\rho$.
Next, we say that $\pi$ and $\rho$ are \textit{unitarily equivalent} if there is a unitary (that is, an isometric invertible linear map) $U:\cH\to \cK$ such that $Y_i=UX_iU^*$ for $i=1,\dots,4$. Finally, we say that $\pi\in\Rep_\alpha$ is \textit{irreducible} if it is not unitarily equivalent to a direct sum of representations. Irreducible representations can be viewed as the building blocks of $\Rep_\alpha$; namely, every representation is unitarily equivalent to a (possibly infinite) direct sum of irreducible representations.
Without going into technical details, viewing $\Rep_\alpha$ as a category instead of merely a set encapsulates these relations between representations (e.g., that some of them are unitarily equivalent, some are direct sums of others, and some are irreducible). 

In this paper,  representations of $\cA_\alpha$ (for certain choices of $\alpha$) give rise to the projective measurements in self-tested strategies presented in Section {\ref{s:st}}. To establish the self-testing property, it is imperative to have a good handle on $\Rep_\alpha$ (concretely, on the irreducible representations within). 
This is straightforward for $\alpha=0$ and $\alpha=1$. Indeed, the only quadruples of projections adding to 0 are tuples of zero operators; these are all direct sums of the trivial representation $\tau$ given by $\tau(x_j)=0$ acting on the one-dimensional Hilbert space. Hence $\Rep_0$ contains a unique irreducible representation. On the other hand, quadruples of projections adding to 1 are necessarily diagonalizable, and thus unitarily equivalent to direct sums of $(1,0,0,0)$, $(0,1,0,0)$, $(0,0,1,0)$, $(0,0,0,1)$ acting on the one-dimensional Hilbert space. Thus $\Rep_1$ contains exactly four unitarily non-equivalent irreducible representations.
For general $\alpha$, representations of $\cA_\alpha$ are not yet well-understood; however, the aim of the next subsection is to leverage the knowledge of the very simple $\Rep_1$ to study $\Rep_\alpha$ for certain values of $\alpha$.

\subsection{Functors between representation categories}

In this subsection we define two functors $T=T_\alpha:\Rep_\alpha\to \Rep_{4-\alpha}$ (linear reflection) and $S=S_\alpha:\Rep_\alpha\to \Rep_{\frac{\alpha}{\alpha-1}}$ (hyperbolic reflection). The subscripts are omitted when clear from the context.
Before defining $T$ and $S$, let us mention what a reader should imagine under this terminology. A functor from $\Rep_\alpha$ to $\Rep_\beta$ is primarily a mapping, that takes each quadruple of projections adding to $\alpha$ times identity to a quadruple of projections adding to $\beta$ times identity. However, being a functor means that this mapping has to respect the additional structure of the categories $\Rep_\alpha$ and $\Rep_\beta$; in particular, it needs to preserve direct sums, and map unitarily equivalent representations to unitarily equivalent representations. Technically, one encapsulates this by saying that a functor consists of a map between objects of categories and a (well-behaved) map between morphisms of categories.

$(T)$: Given a representation $\pi$ of $\cA_\alpha$ let $T(\pi)$ be the representation of $\cA_{4-\alpha}$ determined by $T(\pi)(x_i) := I-\pi(x_i)$. Note that $T$ commutes with equivariant maps between representations, so it extends to a functor $T:\Rep_\alpha\to \Rep_{4-\alpha}$.
If $[\pi]=(\alpha;n;d_i)$ then $[T(\pi)]=(4-\alpha;n;n-d_i)$.

($S$): Suppose $\alpha\notin\{0,1\}$, and let $\pi$ be a representation of $\cA_\alpha$ on $\cH$. 
Denote $\widehat{\cH}=\bigoplus_i \ran \pi(x_i)$. Let $w_i:\ran \pi(x_i)\to \widehat{\cH}$ be the canonical injections, 
and let $u_i:\ran \pi(x_i)\to \cH$ be inclusions. Then
$$u=\frac{1}{\sqrt{\alpha}}\begin{pmatrix}
u_1^* \\ \vdots \\ u_4^*
\end{pmatrix}:\cH\to\widehat{\cH}$$
is an isometry by definition of the algebra $\cA_\alpha$. Let $\cK=\ran(I-uu^*)$, with inclusion $v:\cK\to\widehat{\cH}$. Note that $\dim\cK=\dim\widehat\cH-\dim\cH$. Define
$$S(\pi)(x_i):=\frac{\alpha}{\alpha-1}
v^*w_iw_i^*v.$$
Then
\begin{align*}
\left(S(\pi)(x_i)\right)^2
&=\frac{\alpha^2}{(\alpha-1)^2}v^*w_iw_i^*vv^*w_iw_i^*v 
=\frac{\alpha^2}{(\alpha-1)^2}v^*w_iw_i^*(I-uu^*)w_iw_i^*v \\
&=\frac{\alpha^2}{(\alpha-1)^2}v^*w_i\left(I-\frac{1}{\alpha}u_i^*u_i\right)w_i^*v 
=\frac{\alpha^2}{(\alpha-1)^2}\left(1-\frac{1}{\alpha}\right)v^*w_iw_i^*v\\
&=S(\pi)(x_i)
\end{align*}
and
$$\sum_{i=1}^4S(\pi)(x_i)
=\sum_{i=1}^4\frac{\alpha}{\alpha-1}v^*w_iw_i^*v 
=\frac{\alpha}{\alpha-1}v^*\left(\sum_{i=1}^4 w_iw_i^*\right) v 
=\frac{\alpha}{\alpha-1}v^*v=\frac{\alpha}{\alpha-1}I. 
$$
Therefore $S(\pi)(x_1),\dots,S(\pi)(x_4)$ are projections that give rise to a representation $S(\pi)$ of $\cA_{\frac{\alpha}{\alpha-1}}$ on $\cK$.
As described in \cite[Section 1.2]{kruglyak}, one can also extend $S$ to equivariant maps, resulting in a functor $S:\Rep_\alpha\to \Rep_{\frac{\alpha}{\alpha-1}}$.
If $[\pi]=(\alpha;n;d_i)$ then $[S(\pi)]=(\frac{\alpha}{\alpha-1};\sum_id_i-n;d_i)$. 

\subsection{Distinguished quadruples of projections}

For $\alpha\in(0,3)$, the (Coxeter) functor
$$\Phi^+=S\circ T=S_{4-\alpha}\circ T_\alpha: \Rep_\alpha\to \Rep_{1+\frac{1}{3-\alpha}}$$
define an equivalence of categories (with inverse $T\circ S$) by \cite[Theorem 2]{kruglyak}.
In particular, $\Phi^+$ is a bijection between representations of $\cA_\alpha$ and $\cA_{1+\frac{1}{3-\alpha}}$, which maps irreducible ones to irreducible ones.
If $[\pi]=(\alpha,n,d_1,\dots,d_4)$ then $[\Phi^+(\pi)]=(1+\frac{1}{3-\alpha};3n-\sum_id_i;n-d_i)$. 
The functor $\Phi^+$ plays an implicit yet crucial role in \cite[Proposition 3]{kruglyak} that describes the category $\Rep_{2-\frac{1}{n}}$. For the sake of completeness, we provide the proof of the part of \cite[Proposition 3]{kruglyak}, and refine it to extract the additional information needed in this paper.
Given a real number $\beta$ let $\lfloor \beta\rfloor$ denote the largest integer that is not larger than $\beta$.

The main statement of this section shows that starting with the easily-understood $\Rep_1$ and then repeatedly applying the functor $\Phi^+$, one obtains a good grasp on $\Rep_{2-\frac1n}$ for every $n\in\N$.

\begin{prop}[{\cite[Proposition 3(c)]{kruglyak}}]\label{p:ukr}
Let $n\in\N$. The C*-algebra $\cA_{2-\frac{1}{n}}$ has precisely four unitarily non-equivalent irreducible representations. \\
More concretely, there are projections $\fP^{(n)}_1,\dots,\fP^{(n)}_4\in\mtxr{n}$ with $\rk \fP^{(n)}_1=\lfloor\frac{n}{2}\rfloor-(-1)^n$ and $\rk \fP^{(n)}_i=\lfloor\frac{n}{2}\rfloor$ for $i=2,3,4$, such that given an irreducible representation of $\cA_{2-\frac{1}{n}}$, the quadruple $(\pi(x_1),\dots,\pi(x_4))$ is unitarily equivalent to one of the
\begin{align*}
&(\fP^{(n)}_1,\fP^{(n)}_2,\fP^{(n)}_3,\fP^{(n)}_4),\quad 
(\fP^{(n)}_4,\fP^{(n)}_1,\fP^{(n)}_2,\fP^{(n)}_3),\\ 
&(\fP^{(n)}_3,\fP^{(n)}_4,\fP^{(n)}_1,\fP^{(n)}_2),\quad 
(\fP^{(n)}_2,\fP^{(n)}_3,\fP^{(n)}_4,\fP^{(n)}_1).
\end{align*}
\end{prop}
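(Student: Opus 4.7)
The plan is to prove Proposition \ref{p:ukr} by induction on $n$, transporting the transparent structure of $\Rep_1$ through the Coxeter functor $\Phi^+:\Rep_\alpha\to\Rep_{1+\frac{1}{3-\alpha}}$. A direct computation confirms that iterating $\alpha\mapsto 1+\frac{1}{3-\alpha}$ from $\alpha_1=1$ produces the sequence $\alpha_n=2-\frac{1}{n}$, so $\Phi^+$ shuttles the chain $\Rep_1\to\Rep_{3/2}\to\Rep_{5/3}\to\cdots$ upward. Since $\Phi^+$ is an equivalence of categories by \cite[Theorem 2]{kruglyak}, it induces a bijection on unitary equivalence classes of irreducibles and preserves direct sums.

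For the base case $n=1$, any quadruple of projections summing to $I$ is a resolution of the identity into mutually orthogonal pieces, so every $\pi\in\Rep_1$ is a direct sum of one-dimensional representations, and the four irreducibles are the cyclic shifts of $\fP^{(1)}_1=(1)$, $\fP^{(1)}_i=(0)$ for $i=2,3,4$. For the inductive step I would define $(\fP^{(n+1)}_1,\dots,\fP^{(n+1)}_4):=\Phi^+(\fP^{(n)}_1,\dots,\fP^{(n)}_4)$. The dimension/rank transformation $[\Phi^+(\pi)]=(\alpha_{n+1};\,3n-\sum_i d_i;\,n-d_i)$ yields dimension $n+1$ and new ranks $(n-a_n,n-b_n,n-b_n,n-b_n)$, where $a_n,b_n$ denote the old first and other ranks; the recursions $a_{n+1}=n-a_n$, $b_{n+1}=n-b_n$ with initial data $a_1=1$, $b_1=0$ are straightforwardly matched by $\lfloor (n+1)/2\rfloor-(-1)^{n+1}$ and $\lfloor (n+1)/2\rfloor$. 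Since every linear-algebra step entering $T$ and $S$ (including a choice of real orthonormal basis of $\cK=\ran(I-uu^*)$) respects real matrix structure, the projections may be taken in $\mtxr{n+1}$.

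It remains to argue that every irreducible of $\Rep_{\alpha_{n+1}}$ is a cyclic rearrangement of $(\fP^{(n+1)}_1,\dots,\fP^{(n+1)}_4)$. The key observation is that $T$ and $S$ are defined symmetrically in the four indices: for any permutation $\sigma\in S_4$, writing $\sigma\pi$ for the representation with $(\sigma\pi)(x_i)=\pi(x_{\sigma^{-1}(i)})$, one has $T(\sigma\pi)=\sigma T(\pi)$ on the nose, and $S(\sigma\pi)\simeq\sigma S(\pi)$ via the unitary on $\widehat\cH$ that permutes the four direct summands according to $\sigma$. Combining this with the inductive hypothesis and the bijectivity of $\Phi^+$ on irreducibles forces the four irreducibles of $\Rep_{\alpha_{n+1}}$ to be exactly the cyclic shifts of $(\fP^{(n+1)}_1,\dots,\fP^{(n+1)}_4)$, and their pairwise distinctness follows from $a_n\ne b_n$, which places the anomalous-rank projection $\fP^{(n+1)}_1$ into a different slot for each shift. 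The main obstacle will be making the identification $S(\sigma\pi)\simeq\sigma S(\pi)$ rigorous, which requires careful bookkeeping of the permutation unitary on $\widehat\cH$ and of its effect on the subspace $\cK$.
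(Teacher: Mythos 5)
Your plan matches the paper's proof almost line for line: induct on $n$, use that the Coxeter functor $\Phi^+$ is an equivalence $\Rep_{2-\frac1n}\to\Rep_{2-\frac{1}{n+1}}$, set $\fP^{(n+1)}_i:=\Phi^+(\pi)(x_i)$, and read off the ranks from the formula for $[\Phi^+(\pi)]$; your rank bookkeeping ($a_{n+1}=n-a_n$, $b_{n+1}=n-b_n$) and the reality claim are both correct. The one place where you go beyond the paper's terse write-up is in flagging and addressing why the four irreducibles of $\cA_{2-\frac{1}{n+1}}$ are exactly the \emph{cyclic} shifts of $\Phi^+(\pi)$; the paper simply asserts this, implicitly relying on the same $S_4$-equivariance you propose to verify. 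That said, the ``careful bookkeeping of the permutation unitary on $\widehat\cH$'' that you anticipate as the main obstacle can be avoided entirely: since $\Phi^+$ is an equivalence you already know there are exactly four irreducibles, and since $a_n\neq b_n$ (your formulas give $|a_n-b_n|=1$ for all $n$) the four cyclic permutations of $(\fP^{(n+1)}_1,\dots,\fP^{(n+1)}_4)$ are pairwise non-equivalent irreducibles with four distinct rank profiles, so by counting they must exhaust the list — no analysis of how $S$ interacts with $\sigma$ is needed.
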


\begin{proof}
We prove the statement by induction on $n$.
If $n=1$, then $\fP^{(1)}_1=1$ and $\fP^{(1)}_i=0$ for $i=2,3,4$ are the desired $1\times 1$ projections, giving rise to a representation $\cA_1\to\C$. Now suppose projections $\fP^{(n)}_i\in\mtxr{n}$ possess the desired properties.
Then they define an irreducible representation of $\cA_{2-\frac{1}{n}}$ given by $\pi(x_i)=\fP^{(n)}_i$, and the other three irreducible representations up to unitary equivalence are obtained by cyclically permuting the generators. Now let $\fP^{(n+1)}_i:=\Phi^+(\pi)(x_i)$. Since $\Phi^+:\Rep_{2-\frac{1}{n}}\to\Rep_{2-\frac{1}{n+1}}$ is an equivalence of categories, $\Phi^+(\pi)$ is an irreducible representation of $\cA_{2-\frac{1}{n}}$, and the other three irreducible representations up unitary equivalence are obtained via cyclic permutations of generators. The rank values are determined by comparing $[\pi]$ and $[\Phi^+(\pi)]$.
\end{proof}

Projections $\fP^{(n)}_i$ are central to the self-testing results in this paper. The intuition behind their applicability to self-tests is the following: if we momentarily forget irreducibility, they are characterized by having certain traces and satisfying a linear equation. In a quantum strategy with a maximally entangled state and projective measurements, traces and linear relations among the PVMs are encoded by the correlation. This makes strategies with maximally entangled states and measurements $(\fP^{(n)}_i,I-\fP^{(n)}_i)$ very natural candidates for the self-testing phenomenon.

\begin{rem}\label{r:recursive}
Proposition {\ref{p:ukr}} does not provide a closed-form expression for projections $\fP^{(n)}_1,\dots,\fP^{(n)}_4\in\mtxr{n}$ as functions of $n$. Nevertheless, definitions of functors $T$ and $S$ give rise to a recursive procedure for constructing $\fP^{(n)}_i\in\mtxr{n}$ from $\fP^{(n-1)}_i\in\mtxr{n-1}$.
This procedure requires only matrix arithmetic and Gram-Schmidt orthogonalization.
\\
Basis of recursion $n=1$: set $\fP^{(1)}_1:=1$ and $\fP^{(1)}_i:=0$ for $i=2,3,4$.
\\
Recursive step $n\to n+1$: given $\fP^{(n)}_1,\dots,\fP^{(n)}_4$ let
\begin{itemize}
	\item $U_i$ be an $n\times\rk (n-\fP^{(n)}_i)$ matrix whose columns form an orthonormal basis of the column space of $I-\fP^{(n)}_i$;
	\item $V_i$ be an $(\rk\fP^{(n)}_i) \times(n+1)$ matrix such that the columns of
	$$\begin{pmatrix}V_1 \\ \vdots \\ V_4
	\end{pmatrix}$$ 
form an orthonormal basis of the column space of
	$$I-\frac{1}{2+\frac{1}{n}}
	\begin{pmatrix}U_1^* \\ \vdots \\ U_4^*\end{pmatrix}
	\begin{pmatrix}U_1 &\cdots & U_4\end{pmatrix}.
	$$
\end{itemize}
Then set $\fP^{(n+1)}_i:=(2-\frac{1}{n+1})V_i^*V_i$.

Using the above procedure, we obtain the following projections for $n=1,2,3$:
\begin{align*}
&\fP^{(1)}_1=(1),\ \fP^{(1)}_2=(0),\ \fP^{(1)}_3=(0),\ \fP^{(1)}_4=(0) \\
&\fP^{(2)}_1=\begin{pmatrix}0&0\\0&0\end{pmatrix},\ 
\fP^{(2)}_2=\begin{pmatrix} 1 & 0 \\ 0 & 0\end{pmatrix},\ 
\fP^{(2)}_3=\begin{pmatrix}
	\frac{1}{4} & \frac{-\sqrt{3}}{4} \\
	\frac{-\sqrt{3}}{4} & \frac{3}{4} \\
\end{pmatrix},\ 
\fP^{(2)}_4=\begin{pmatrix}
	\frac{1}{4} & \frac{\sqrt{3}}{4} \\
	\frac{\sqrt{3}}{4} & \frac{3}{4} \\
\end{pmatrix} \\
&\fP^{(3)}_1=\begin{pmatrix}
	1 & 0 & 0 \\
	0 & 1 & 0 \\
	0 & 0 & 0
\end{pmatrix},\ 
\fP^{(3)}_2=\begin{pmatrix}
	0 & 0 & 0 \\
	0 & \frac{4}{9} & \frac{-2 \sqrt{5}}{9} \\
	0 & \frac{-2 \sqrt{5}}{9} & \frac{5}{9}
\end{pmatrix},\ \\
&\qquad\fP^{(3)}_3=\begin{pmatrix}
	\frac{1}{3} & \frac{1}{3 \sqrt{3}} & \frac{\sqrt{5}}{3\sqrt{3}} \\
	\frac{1}{3 \sqrt{3}} & \frac{1}{9} & \frac{\sqrt{5}}{9} \\
	\frac{\sqrt{5}}{3\sqrt{3}} & \frac{\sqrt{5}}{9} & \frac{5}{9}
\end{pmatrix},\ 
\fP^{(3)}_4=\begin{pmatrix}
	\frac{1}{3} & \frac{-1}{3 \sqrt{3}} & \frac{-\sqrt{5}}{3\sqrt{3}} \\
	\frac{-1}{3 \sqrt{3}} & \frac{1}{9} & \frac{\sqrt{5}}{9} \\
	\frac{-\sqrt{5}}{3\sqrt{3}} & \frac{\sqrt{5}}{9} & \frac{5}{9}
\end{pmatrix}
\end{align*}
The linear-algebraic nature of this procedure allows for a feasible implementation using exact arithmetic. For concrete matrices in cases $n=4,5,6$, see Appendix {\ref{app}}.
\end{rem}

For later use we record a technical fact.

\begin{lem}\label{l:invert}
The $4\times 4$ matrix
$$\begin{pmatrix}
\rk \fP^{(n)}_1 & \rk \fP^{(n)}_2 & \rk \fP^{(n)}_3 & \rk \fP^{(n)}_4 \\
\rk \fP^{(n)}_4 & \rk \fP^{(n)}_1 & \rk \fP^{(n)}_2 & \rk \fP^{(n)}_3 \\
\rk \fP^{(n)}_3 & \rk \fP^{(n)}_4 & \rk \fP^{(n)}_1 & \rk \fP^{(n)}_2 \\
\rk \fP^{(n)}_2 & \rk \fP^{(n)}_3 & \rk \fP^{(n)}_4 & \rk \fP^{(n)}_1
\end{pmatrix}=-(-1)^n I_4+\left\lfloor\frac{n}{2}\right\rfloor \begin{pmatrix}
1&1&1&1\\1&1&1&1\\1&1&1&1\\1&1&1&1
\end{pmatrix}
$$
is invertible for every $n\in\N$.
\end{lem}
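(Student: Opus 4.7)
My plan is to exploit the explicit decomposition already written in the lemma's statement. From the ranks $\rk \fP^{(n)}_1=\lfloor n/2\rfloor-(-1)^n$ and $\rk \fP^{(n)}_i=\lfloor n/2\rfloor$ for $i=2,3,4$ provided by Proposition~\ref{p:ukr}, the displayed identity is immediate, so I can take for granted that the matrix equals $-(-1)^n I_4+\lfloor n/2\rfloor J$, where $J$ denotes the $4\times 4$ all-ones matrix. From here the strategy is simply to read off the spectrum and confirm that no eigenvalue vanishes.

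Since $J$ has eigenvalues $4$ (simple, with eigenvector $(1,1,1,1)^T$) and $0$ (of multiplicity three, on the orthogonal complement of that vector), the spectrum of $-(-1)^n I_4+\lfloor n/2\rfloor J$ consists of $-(-1)^n+4\lfloor n/2\rfloor$ with multiplicity one and $-(-1)^n$ with multiplicity three. A short parity check shows that $-(-1)^n+4\lfloor n/2\rfloor=2n-1$ in both the even and odd cases, while $-(-1)^n=\pm 1$. Hence all four eigenvalues are nonzero for every $n\in\N$, the determinant equals $\pm(2n-1)\neq 0$, and the matrix is invertible.

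There is essentially no obstacle here: the content of the lemma is absorbed by the identity already given in the statement, and the spectrum of a matrix of the form $\lambda I+\mu J$ is textbook. As an alternative viewpoint, one could recognize the matrix as circulant (each row is the previous one shifted cyclically by one position) and derive the same four eigenvalues by evaluating the generating polynomial at the fourth roots of unity; the outcome is identical.
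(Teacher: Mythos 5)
Your proof is correct. The paper states this lemma without proof (merely labeling it a "technical fact"), so there is no argument to compare against, but your spectral computation is the natural one: write the matrix as $-(-1)^n I_4 + \lfloor n/2\rfloor J$, use that $J$ has eigenvalues $4,0,0,0$, and check that the resulting eigenvalues $2n-1$ (simple) and $-(-1)^n$ (triple) are never zero for $n\in\N$. The determinant is $-(-1)^n(2n-1)\neq 0$, confirming invertibility, and your circulant-matrix remark gives the same conclusion by an equivalent route.
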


\begin{rem}\label{r:trace}
Let us determine the normalized traces of $\fP^{(n)}_i$ and their products; these values will appear in the self-testing correlations of this paper. Clearly,
$$
\tau\left(\fP^{(n)}_1\right)=\frac12-\frac{1+3(-1)^n}{4n},\qquad
\tau\left(\fP^{(n)}_i\right)=\frac12-\frac{1-(-1)^n}{4n},\quad \text{for }i=2,3,4.
$$
Next, by Proposition \ref{p:ukr},
for every permutation $\sigma$ of $\{2,3,4\}$ there exists a unitary $U\in\mtxc{n}$ such that
$$U\fP^{(n)}_1U^*=\fP^{(n)}_1,\qquad 
U\fP^{(n)}_iU^*=\fP^{(n)}_{\sigma(i)},\quad
\text{for }i=2,3,4.
$$
Therefore $\tau(\fP^{(n)}_1\fP^{(n)}_i)$ is independent of $i\in\{2,3,4\}$, and $\tau(\fP^{(n)}_i\fP^{(n)}_j)$ is independent of $i,j\in\{2,3,4\}$ with $i\neq j$. From the equation $\sum_{j=1}^4\fP^{(n)}_i\fP^{(n)}_j=(2-\frac1n)\fP^{(n)}_i$ for $i=1,\dots,4$ we then obtain
\begin{align*}
\tau\left(\fP^{(n)}_1\fP^{(n)}_i\right)&=
\frac13\left(1-\frac1n\right)\tau\left(\fP^{(n)}_1\right)
\quad \text{for }i=2,3,4, \\
\tau\left(\fP^{(n)}_i\fP^{(n)}_j\right)&=
\frac12\left(1-\frac1n\right)\left(\tau\left(\fP^{(n)}_i\right)-\frac13\tau\left(\fP^{(n)}_1\right)\right)
\quad \text{for }i,j=2,3,4 \text{ and }i\neq j.
\end{align*}
\end{rem}

\section{Spectral results}\label{s:spec}

Let $n\in\N$. The projections $\fP^{(n)}_1,\dots,\fP^{(n)}_4$ of Proposition {\ref{p:ukr}} play a central role in self-tests of Section {\ref{s:st}} below. Namely, they appear as projective measurements in a self-tested strategy in Subsection {\ref{ss:state}}; the fact that they are determined by a linear relation $\fP^{(n)}_1+\cdots+\fP^{(n)}_4=(2-\frac1n)I$ is beneficial for deducing the measurements from the correlation. Nevertheless, to obtain a self-test, one still needs to be able to deduce the quantum state from the correlation. Furthermore, in Subsection {\ref{ss:meas}}, the presented strategies contain an additional projective measurement, which, while related to the $\fP^{(n)}_i$, is itself not a part of quadruple adding to a scalar multiple of identity. To help with the identification of the quantum state and the additional measurements from the correlation, we first require some information on eigenvalues and eigenvectors of certain tensor combinations and sums of pairs of the matrices $\fP^{(n)}_i$. Concretely, Proposition {\ref{p:eigvec}} shows how the maximally entangled state is related to $\fP^{(n)}_1,\dots,\fP^{(n)}_4$, and Proposition {\ref{p:spectrum}} shows that $\fP^{(n)}_3+\fP^{(n)}_4$ has pairwise distinct eigenvalues, which enables post-hoc self-testing techniques \mbox{\cite{supic,CMV}}.

\subsection{Role of the maximally entangled state}

First, we identify the largest eigenvalue of $\sum_i\fP^{(n)}_i\otimes \fP^{(n)}_i$ and the corresponding eigenvector (cf. \cite[Lemma 5.7]{mancinska}), and bound the spectrum of $\sum_i\fP^{(n)}_i\otimes \fP^{(n)}_{\sigma(i)}$ for a nontrivial cyclic permutation $\sigma$ of $(1,2,3,4)$.
Given $\ket{\psi}=\sum_{i,j}\alpha_{ij}\ket{i}\!\ket{j}\in\C^n\otimes\C^n$ let $\mat(\ket{\psi})=\sum_{i,j}\alpha_{ij}\ket{i}\!\bra{j}\in\mtxc{n}$ denote its matricization; note that $\mat(\ket{\phi_n})=\frac{1}{\sqrt{n}}I$, and
$$\mat\big(A\otimes B\ket{\psi}\big)=A\mat(\ket{\psi})B^\ti$$
for $A,B\in\mtxc{n}$.

\begin{lem}\label{l:eigvec}
Let $n\in\N$ and let $\sigma$ be a cyclic permutation $\sigma$ of $(1,2,3,4)$. Denote $M=\frac{n}{2n-1}\sum_{i=1}^4 \fP^{(n)}_i\otimes\fP^{(n)}_{\sigma(i)}$.
\begin{enumerate}[(i)]
    \item If $\sigma=\id$, then the largest eigenvalue of $M$ is 1, with the eigenspace $\C\ket{\phi_n}$.
    \item If $\sigma\neq\id$, then all eigenvalues of $M$ are strictly smaller than 1.
\end{enumerate}
\end{lem}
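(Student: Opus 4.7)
The plan is to analyze $M$ via its action on matricizations. Each $\fP^{(n)}_i\otimes\fP^{(n)}_{\sigma(i)}$ is a projection, so $M$ is self-adjoint with real spectrum; and $\mat(A\otimes B\ket{\psi})=AXB^\ti$, combined with the reality and symmetry of the $\fP^{(n)}_i$, turns $M\ket{\psi}=\lambda\ket{\psi}$ into
$$\frac{n}{2n-1}\sum_{i=1}^4 \fP^{(n)}_i\, X\, \fP^{(n)}_{\sigma(i)}=\lambda X,\qquad X=\mat(\ket{\psi}).$$
Since $I-\fP^{(n)}_{\sigma(i)}\succeq 0$ implies $\fP^{(n)}_i\otimes(I-\fP^{(n)}_{\sigma(i)})\succeq 0$, summing over $i$ and using $\sum_i\fP^{(n)}_i=\frac{2n-1}{n}I$ gives the global bound $\bra{\psi}M\ket{\psi}\le\|\psi\|^2$, so every eigenvalue of $M$ is at most $1$. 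For case (i), a direct computation with $\mat(\ket{\phi_n})=\frac{1}{\sqrt n}I$ and $\sum_i(\fP^{(n)}_i)^2=\sum_i\fP^{(n)}_i$ shows $M\ket{\phi_n}=\ket{\phi_n}$, so the eigenvalue $1$ is attained.

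The bulk of the argument analyzes when $\bra{\psi}M\ket{\psi}=\|\psi\|^2$. Since each summand $\bra{\psi}\fP^{(n)}_i\otimes(I-\fP^{(n)}_{\sigma(i)})\ket{\psi}$ is nonnegative, equality of the sums forces each of them to vanish, and positivity of these operators then gives $\big(\fP^{(n)}_i\otimes(I-\fP^{(n)}_{\sigma(i)})\big)\ket{\psi}=0$; in matricization, $\fP^{(n)}_i X=\fP^{(n)}_i X\fP^{(n)}_{\sigma(i)}$. Running the symmetric estimate with the two tensor factors swapped yields $X\fP^{(n)}_{\sigma(i)}=\fP^{(n)}_i X\fP^{(n)}_{\sigma(i)}$, and combining the two produces the intertwining identity $\fP^{(n)}_i X=X\fP^{(n)}_{\sigma(i)}$ for every $i$. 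Thus $X$ intertwines the representations $\pi\colon x_i\mapsto\fP^{(n)}_i$ and $\pi_\sigma\colon x_i\mapsto\fP^{(n)}_{\sigma(i)}$ of $\cA_{2-\frac{1}{n}}$.

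For (i), $\sigma=\id$ yields $[X,\fP^{(n)}_i]=0$ for every $i$; irreducibility of $\pi$ (Proposition \ref{p:ukr}) forces $X\in\C I$, so $\ket{\psi}\in\C\ket{\phi_n}$, completing (i). For (ii), Proposition \ref{p:ukr} singles out exactly four pairwise non-equivalent irreducible representations of $\cA_{2-\frac{1}{n}}$, namely the cyclic shifts of $(\fP^{(n)}_1,\fP^{(n)}_2,\fP^{(n)}_3,\fP^{(n)}_4)$; for a nontrivial cyclic $\sigma$, the representations $\pi$ and $\pi_\sigma$ therefore sit in two distinct classes on that list, so Schur's lemma forces $X=0$ and hence $\ket{\psi}=0$. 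This rules out $\lambda=1$, and combined with the bound from the first paragraph gives $\lambda<1$ for every eigenvalue. The main technical hurdle is extracting a clean intertwining relation from the equality case of the positivity estimate; once that is in place, the representation-theoretic content of Proposition \ref{p:ukr} closes both cases uniformly.
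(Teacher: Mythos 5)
Your proof is correct and follows essentially the same route as the paper's: bound $M\preceq I$ using $\fP^{(n)}_{\sigma(i)}\preceq I$ and $\sum_i\fP^{(n)}_i=\frac{2n-1}{n}I$, extract $\fP^{(n)}_i\otimes(I-\fP^{(n)}_{\sigma(i)})\ket{\psi}=0$ and its mirror from the equality case via positivity, pass to the intertwining relation $\fP^{(n)}_i X = X\fP^{(n)}_{\sigma(i)}$ through matricization, and invoke Schur's lemma together with Proposition~\ref{p:ukr}. The only cosmetic difference is that you verify $M\ket{\phi_n}=\ket{\phi_n}$ by direct matrix computation rather than by the trace identity $\bra{\phi_n}I-M\ket{\phi_n}=0$, but the substance is identical.
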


\begin{proof}
Let $\ket{\psi}\in\C^n\otimes\C^n$ be an arbitrary state. Then
\begin{equation}\label{e:eigest}
\begin{split}
\bra{\psi}I\otimes I-M\ket{\psi}
&\ge\bra{\psi}I\otimes I-\frac{n}{2n-1}\sum_{i=1}^4 \fP^{(n)}_i\otimes I\ket{\psi} \\
&=\bra{\psi}\left(I-\frac{n}{2n-1}\sum_{i=1}^4 \fP^{(n)}_i\right)\otimes I\ket{\psi}=0.
\end{split}
\end{equation}
Therefore the largest eigenvalue of $M$ is at most 1.
Since
$$
\bra{\phi_n}I\otimes I-\frac{n}{2n-1}\sum_{i=1}^4 \fP^{(n)}_i\otimes\fP^{(n)}_i\ket{\phi_n}=
\tau\left(I-\frac{n}{2n-1}\sum_{i=1}^4 \fP^{(n)}_i\right)=0,
$$
$\ket{\phi_n}$ is an eigenvector of $M$ for eigenvalue 1 if $\sigma=\id$.
Suppose $\ket{\psi}\in\C^n\otimes\C^n$ satisfies $M\ket{\psi}=\ket{\psi}$. Then \eqref{e:eigest} gives
$$\bra{\psi}M\ket{\psi}
=\bra{\psi}\frac{n}{2n-1}\sum_{i=1}^4 \fP^{(n)}_i\otimes I\ket{\psi}
$$
and therefore
$$\bra{\psi}\sum_{i=1}^4 \fP^{(n)}_i\otimes (I-\fP^{(n)}_{\sigma(i)})\ket{\psi}
=0.$$
Positive semidefinitness then implies $\fP^{(n)}_i\otimes (I-\fP^{(n)}_{\sigma(i)})\ket{\psi}=0$, and analogously
$(I-\fP^{(n)}_i)\otimes\fP^{(n)}_{\sigma(i)}\ket{\psi}=0$. In particular, $\fP^{(n)}_i\otimes I\ket{\psi}=I\otimes \fP^{(n)}_{\sigma(i)}\ket{\psi}$ for $i=1,\dots,4$. Therefore 
\begin{equation}\label{e:inter}
\fP^{(n)}_i\mat(\ket{\psi})=
\mat(\ket{\psi}) \fP^{(n)}_{\sigma(i)}
\qquad \text{for }i=1,\dots,4.
\end{equation}
Note that $\fP^{(n)}_1,\dots,\fP^{(n)}_4$ and $\fP^{(n)}_{\sigma(1)},\dots,\fP^{(n)}_{\sigma(4)}$ give rise to two irreducible representations of $\cA_{2-\frac1n}$ by Proposition \ref{p:ukr}, which are unitarily equivalent if and only if $\sigma=\id$. Since $\mat(\ket{\psi})$ intertwines these two irreducible representations, Schur's lemma implies that $\mat{\ket{\psi}}=\gamma I$ for some $\gamma\in\C$ if $\sigma=\id$, and $\mat{\ket{\psi}}=0$ if if $\sigma\neq\id$. 
Therefore $\ket{\psi}$ is a scalar multiple of $\ket{\phi_n}$ if $\sigma= \id$, and $1$ is not an eigenvalue of $M$ if $\sigma\neq \id$.
\end{proof}

The following proposition shows how the maximally entangled state $\ket{\phi_n}$ is intrinsically connected to representations of $\cA_{2-\frac1n}$.

\begin{prop}\label{p:eigvec}
Let $n\in\N$, let $a_1,\dots,a_4,b_1,\dots,b_4$ be nonnegative integers with $a_1+\cdots+a_4=b_1+\cdots+b_4$, and let $\sigma_1,\dots,\sigma_4$ be the distinct cyclic permutations of $(1,2,3,4)$. 
Consider the identification
$$\C^{(a_1+\cdots+a_4)n}\otimes \C^{(b_1+\cdots+b_4)n} \equiv 
\left(\bigoplus_{j,k=1}^4
\C^{a_j}\otimes \C^{b_k}\right)
\otimes (\C^n\otimes\C^n).$$
Then the largest eigenvalue of
$$\frac{n}{2n-1}\sum_{i=1}^4
\left(
\bigoplus_{j=1}^4 I_{a_j}\otimes\fP^{(n)}_{\sigma_j(i)}
\right)\otimes
\left(
\bigoplus_{j=1}^4 I_{b_j}\otimes
\fP^{(n)}_{\sigma_j(i)}
\right)$$
is 1, with the eigenspace 
$$\Big\{\big(
\ket{\aux_1}\oplus\ket{\aux_2}\oplus\ket{\aux_3}\oplus\ket{\aux_4}
\big)\otimes\ket{\phi_n}
\colon \ket{\aux_j}\in\C^{a_j}\otimes \C^{b_j}\Big\}.$$
\end{prop}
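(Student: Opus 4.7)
The plan is to reduce to Lemma~\ref{l:eigvec} by exploiting the block structure of the operator under consideration. Write
$$A_i := \bigoplus_{j=1}^4 I_{a_j}\otimes \fP^{(n)}_{\sigma_j(i)}, \qquad B_i := \bigoplus_{k=1}^4 I_{b_k}\otimes \fP^{(n)}_{\sigma_k(i)},$$
so that the operator in question is $M = \frac{n}{2n-1}\sum_{i=1}^4 A_i\otimes B_i$. For any fixed $j$, the $\sigma_j$-permuted sum $\sum_i \fP^{(n)}_{\sigma_j(i)} = \sum_i \fP^{(n)}_i = (2-\frac{1}{n})I_n$, hence $\sum_i A_i = (2-\frac{1}{n})I$ and analogously $\sum_i B_i = (2-\frac{1}{n})I$. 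Repeating the computation at the start of the proof of Lemma~\ref{l:eigvec} then yields $\bra{\psi}I-M\ket{\psi}\ge 0$ for every state $\ket{\psi}$, so every eigenvalue of $M$ is at most $1$.

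Next, since $A_i$ is block-diagonal with respect to $\cH_A = \bigoplus_j \C^{a_j}\otimes\C^n$ and $B_i$ with respect to $\cH_B = \bigoplus_k \C^{b_k}\otimes\C^n$, the operator $M$ preserves each summand $(\C^{a_j}\otimes\C^{b_k})\otimes(\C^n\otimes\C^n)$ of $\cH_A\otimes\cH_B$ and acts there as $I_{a_jb_k}\otimes M_{jk}$, where
$$M_{jk} := \frac{n}{2n-1}\sum_{i=1}^4 \fP^{(n)}_{\sigma_j(i)}\otimes \fP^{(n)}_{\sigma_k(i)}.$$
Substituting $i' = \sigma_j(i)$ rewrites $M_{jk} = \frac{n}{2n-1}\sum_{i'}\fP^{(n)}_{i'}\otimes \fP^{(n)}_{\tau(i')}$ with $\tau := \sigma_k\circ\sigma_j^{-1}$. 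Because the four $\sigma_\ell$ form the cyclic group generated by the $4$-cycle $(1\,2\,3\,4)$, $\tau$ is itself a cyclic permutation of $(1,2,3,4)$, and $\tau = \id$ if and only if $j=k$.

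To finish I would apply Lemma~\ref{l:eigvec} block by block. For diagonal blocks $j = k$, part~(i) identifies the $1$-eigenspace of $M_{jj}$ as $\C\ket{\phi_n}$, so the $1$-eigenspace of $M$ restricted to the $(j,j)$-summand is $(\C^{a_j}\otimes\C^{b_j})\otimes\ket{\phi_n}$. For off-diagonal blocks $j\neq k$, part~(ii) shows that the spectrum of $M_{jk}$ lies strictly below $1$, so these summands contribute nothing to the $1$-eigenspace. Taking the direct sum over $j$ yields precisely the eigenspace claimed in the proposition.

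The main obstacle is really only bookkeeping: one must arrange the tensor factors so that the ``auxiliary'' direct-sum parts $\C^{a_j}$ and $\C^{b_k}$ are cleanly separated from the two copies of $\C^n$ on which the projections $\fP^{(n)}_i$ act, thereby making $M$ visibly block-diagonal in the $(j,k)$-decomposition. Once the identification is in place, the proposition becomes a direct-sum application of Lemma~\ref{l:eigvec} combined with the group-theoretic observation that $\tau = \sigma_k\sigma_j^{-1}$ cycles through all four cyclic permutations as $(j,k)$ varies, with $\tau = \id$ singling out the diagonal.
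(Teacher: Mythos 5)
Your proof is correct and takes essentially the same approach as the paper, which simply states that the proposition ``follows from the distributivity of tensor product over direct sum, and Lemma~\ref{l:eigvec}.'' You have filled in the bookkeeping the paper leaves implicit — in particular the reindexing $i'=\sigma_j(i)$ and the observation that $\tau=\sigma_k\sigma_j^{-1}$ ranges over the cyclic group and equals $\id$ exactly on the diagonal — and the argument is sound (the preliminary ``$M\le I$'' paragraph is redundant once the block decomposition is in hand, but does no harm).
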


\begin{proof}
Follows from the distributivity of tensor product over direct sum, and Lemma \ref{l:eigvec}.
\end{proof}

\subsection{Spectrum of the sum of two distinguished projections}

Next, we analyze the spectrum of the matrix $\fP^{(n)}_3+\fP^{(n)}_4$ for every $n$. To do this, we return to the functors between categories $\Rep_\alpha$.
Given a finite-dimensional representation $\pi$ of $\cA_\alpha$, let  $\Lambda_\pi\subset[0,2]$ denote the set of eigenvalues of $\pi(x_3+x_4)$. 

\begin{lem}\label{l:spec_funct}
Let $\pi$ be an $n$-dimensional representation of $\cA_\alpha$.
\begin{enumerate}[(i)]
    \item $\Lambda_{T(\pi)}=2-\Lambda_\pi$.
    \item Let $\alpha\notin\{0,1\}$. 
    \begin{enumerate}
    \item[(ii.a)] If $\rk\pi(x_1)+\rk\pi(x_2)>n= \rk\pi(x_3)+\rk\pi(x_4)$ then $$\Lambda_{S(\pi)}=\{0\}\cup\left(\tfrac{\alpha}{\alpha-1}-\tfrac{1}{\alpha-1}\Lambda_\pi\right).$$
    \item[(ii.b)] If $\rk\pi(x_3)+\rk\pi(x_4)>n= \rk\pi(x_1)+\rk\pi(x_2)$ then $$\Lambda_{S(\pi)}=\left\{\tfrac{\alpha}{\alpha-1}\right\}\cup\left(\tfrac{\alpha}{\alpha-1}-\tfrac{1}{\alpha-1} \Lambda_\pi\right).$$
    \end{enumerate}
    \item Let $\alpha\in(0,3)$. 
    \begin{enumerate}
    \item[(iii.a)] If $\rk\pi(x_1)+\rk\pi(x_2)<n= \rk\pi(x_3)+\rk\pi(x_4)$ then $$\Lambda_{\Phi^+(\pi)}=\{0\}\cup\left(1-\tfrac{1}{3-\alpha}+\tfrac{1}{3-\alpha}\Lambda_\pi\right).$$
    \item[(iii.b)] If $\rk\pi(x_3)+\rk\pi(x_4)<n= \rk\pi(x_1)+\rk\pi(x_2)$ then $$\Lambda_{\Phi^+(\pi)}=\left\{1+\tfrac{1}{3-\alpha}\right\}\cup\left(1-\tfrac{1}{3-\alpha}+\tfrac{1}{3-\alpha}\Lambda_\pi\right).$$
    \end{enumerate}
\end{enumerate}
\end{lem}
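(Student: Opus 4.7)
The plan is to handle (i) by direct computation, (ii) by a pair-of-projections argument applied to the construction of $S(\pi)$, and (iii) by composing (i) with (ii) under the rank-tracking imposed by $T$. Part (i) is immediate: since $T(\pi)(x_i)=I-\pi(x_i)$, one has $T(\pi)(x_3+x_4)=2I-\pi(x_3+x_4)$, so $\Lambda_{T(\pi)}=2-\Lambda_\pi$.

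For (ii), I keep the notation from the definition of $S$ and set $E:=uu^*$ and $F:=w_3w_3^*+w_4w_4^*$, both orthogonal projections on $\widehat{\cH}$. After the natural identifications one has $u^*Fu=\tfrac{1}{\alpha}\pi(x_3+x_4)$ on $\cH$ and $v^*Fv=\tfrac{\alpha-1}{\alpha}S(\pi)(x_3+x_4)$ on $\cK$, so $\operatorname{spec}(u^*Fu)=\tfrac{1}{\alpha}\Lambda_\pi$ and $\Lambda_{S(\pi)}=\tfrac{\alpha}{\alpha-1}\operatorname{spec}(v^*Fv)$. Writing $F$ in block form for the decomposition $\widehat{\cH}=\ran E\oplus\cK$ as $F=\bigl(\begin{smallmatrix}F_{11}&F_{12}\\F_{12}^*&F_{22}\end{smallmatrix}\bigr)$, the relation $F^2=F$ yields $F_{22}F_{12}^*=F_{12}^*(I-F_{11})$; consequently, whenever $F_{11}\xi=\lambda\xi$ with $\lambda\in(0,1)$ and $\xi\neq 0$, the vector $\eta:=F_{12}^*\xi$ is nonzero (since $F_{12}F_{12}^*=F_{11}(I-F_{11})$) and satisfies $F_{22}\eta=(1-\lambda)\eta$. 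A symmetric argument gives the reverse inclusion, hence $\operatorname{spec}(F_{22})\cap(0,1)=1-\bigl(\operatorname{spec}(F_{11})\cap(0,1)\bigr)$, so the ``generic'' part of $\Lambda_{S(\pi)}$ equals $\{\tfrac{\alpha-\lambda}{\alpha-1}:\lambda\in\Lambda_\pi\cap(0,\alpha)\}$.

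Next I decide whether $0$ and $1$ are eigenvalues of $v^*Fv$. Observing that $u$ identifies $\ran E\cap\ker F$ with $\ker\pi(x_3)\cap\ker\pi(x_4)$ and $\ran E\cap\ran F$ with $\ker\pi(x_1)\cap\ker\pi(x_2)$, taking orthogonal complements in $\widehat{\cH}$ and counting dimensions yields $\dim(\cK\cap\ker F)=\dim(\ker\pi(x_1)\cap\ker\pi(x_2))+(d_1+d_2-n)$ and $\dim(\cK\cap\ran F)=\dim(\ker\pi(x_3)\cap\ker\pi(x_4))+(d_3+d_4-n)$. In case (ii.a) the hypothesis $d_1+d_2>n=d_3+d_4$ forces $0\in\operatorname{spec}(v^*Fv)$ unconditionally, whereas $1\in\operatorname{spec}(v^*Fv)$ iff $0\in\Lambda_\pi$---which coincides with $\tfrac{\alpha}{\alpha-1}$ already arising from the generic image at $\lambda=0$. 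Combining the pieces yields $\Lambda_{S(\pi)}=\{0\}\cup(\tfrac{\alpha}{\alpha-1}-\tfrac{1}{\alpha-1}\Lambda_\pi)$; case (ii.b) is symmetric, with $\{0\}$ and $\{\tfrac{\alpha}{\alpha-1}\}$ interchanged.

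For (iii), write $\Phi^+=S_{4-\alpha}\circ T_\alpha$; since $T$ sends $d_i\mapsto n-d_i$, the rank hypothesis of (iii.a) (resp.\ (iii.b)) becomes that of (ii.a) (resp.\ (ii.b)) for $S_{4-\alpha}$ applied to $T(\pi)$, and substituting $\alpha\mapsto 4-\alpha$ together with $\Lambda_\pi\mapsto 2-\Lambda_\pi$ in the formulas of (ii) gives the stated identities after elementary simplification. The main obstacle is (ii): simultaneously obtaining the $\lambda\leftrightarrow 1-\lambda$ spectral correspondence and the clean dimension bookkeeping at the boundary eigenvalues $0,1$ under the two separate rank hypotheses.
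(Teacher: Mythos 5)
Your proof is correct, and it takes a genuinely different route from the paper's. The paper's proof of (ii) is a direct computation of the characteristic polynomial of $S(\pi)(x_3+x_4)$: it writes $v^*(w_3w_3^*+w_4w_4^*)v$ as a product of rectangular blocks, applies the Sylvester identity $\det(\lambda I-AB)=\lambda^{m-k}\det(\lambda I-BA)$ twice, and uses $vv^*=I-uu^*$ to land on $\det\bigl((\lambda-\tfrac{\alpha}{\alpha-1})I+\tfrac{1}{\alpha-1}\pi(x_3+x_4)\bigr)$, with the factors $\lambda^{d_1+d_2-n}$ and $(\lambda-\tfrac{\alpha}{\alpha-1})^{d_3+d_4-n}$ produced automatically along the way. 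You instead recognize $E=uu^*$ and $F=w_3w_3^*+w_4w_4^*$ as a Halmos pair of projections on $\widehat{\cH}$, identify $F_{11}=u^*Fu=\tfrac1\alpha\pi(x_3+x_4)$ and $F_{22}=v^*Fv=\tfrac{\alpha-1}{\alpha}S(\pi)(x_3+x_4)$, and exploit the $\lambda\leftrightarrow 1-\lambda$ correspondence between the two compressions on the open interval $(0,1)$ together with dimension bookkeeping ($\ran E\cap\ker F\cong\ker\pi(x_3)\cap\ker\pi(x_4)$, $\ran E\cap\ran F\cong\ker\pi(x_1)\cap\ker\pi(x_2)$, and the rank equalities relating these to $\cK\cap\ker F$ and $\cK\cap\ran F$) to pin down the boundary eigenvalues $0$ and $1$ of $F_{22}$. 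The tradeoff: the paper's determinant identity handles all eigenvalues, boundary included and with multiplicity, in one uniform stroke with no casework; your two-projections argument requires a separate discussion of $\{0,1\}\subset\operatorname{spec}(F_{22})$ and relies on the Halmos structure theory, but it is more conceptual — it makes visible that the hyperbolic reflection $S$ is implementing a ``complementary compression'' of the same projection $F$, and it clarifies exactly why the rank equalities $n=d_3+d_4$ (resp. $n=d_1+d_2$) are what make the formula clean. Your part (i) is identical to the paper's, and part (iii) is the same reduction via $\Phi^+=S\circ T$ with the rank hypotheses translated through $d_i\mapsto n-d_i$.
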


\begin{proof}
Equation (i) follows immediately from $T(\pi)(x_i)=I-\pi(x_i)$. Equations (iii) are consequences of (i) and (ii) because $\Phi^+=S\circ T$.

Equations (ii): Suppose $\pi$ act on $\cH$ with $\dim\cH=n$, and let
\begin{align*}
u_i&:\ran\pi(x_i)\to\cH, \\
w_i&:\ran\pi(x_i)\to \ran\pi(x_1)\oplus\cdots\oplus \ran\pi(x_4),\\
v=\left(\begin{smallmatrix}
v_1\\ \vdots\\v_4
\end{smallmatrix}\right)&: \ran\left(
I-\frac{1}{\alpha}\left(\begin{smallmatrix}
u_1^* \\ \vdots \\ u_4^*
\end{smallmatrix}\right)\left(\begin{smallmatrix}
u_1 & \cdots & u_4
\end{smallmatrix}\right)
\right)\to \ran\pi(x_1)\oplus\cdots\oplus \ran\pi(x_4)
\end{align*}
be inclusions as in the construction of $S$. 
Then $S(\pi)(x_i)=\frac{\alpha}{\alpha-1}v^*w_iw_i^*v$, and the characteristic polynomial of $S(\pi)(x_3+x_4)$ equals
\begin{align*}
&\det\big(\lambda I-S(\pi)(x_3+x_4)\big)\\
=\ &\det\left(\lambda I-\tfrac{\alpha}{\alpha-1}v^*(w_3w_3^*+w_4w_4^*)v\right) \\
=\ &\det\left(\lambda I-\tfrac{\alpha}{\alpha-1}\left(\begin{smallmatrix}
v_3^*&v_4^*
\end{smallmatrix}\right)\left(\begin{smallmatrix}
v_3\\ v_4
\end{smallmatrix}\right)\right) \\
=\ &\lambda^{\rk\pi(x_1)+\rk\pi(x_2)-n}\det\left(\lambda I-\tfrac{\alpha}{\alpha-1}\left(\begin{smallmatrix}
v_3\\ v_4
\end{smallmatrix}\right)\left(\begin{smallmatrix}
v_3^*&v_4^*
\end{smallmatrix}\right)\right) \\
=\ &\lambda^{\rk\pi(x_1)+\rk\pi(x_2)-n}\det\left(\lambda I-\tfrac{\alpha}{\alpha-1}\left(I-\tfrac{1}{\alpha}\left(\begin{smallmatrix}
u_3^*\\ u_4^*
\end{smallmatrix}\right)\left(\begin{smallmatrix}
u_3&u_4
\end{smallmatrix}\right)\right)\right) \\
=\ &\lambda^{\rk\pi(x_1)+\rk\pi(x_2)-n}\det\left(\big(\lambda-\tfrac{\alpha}{\alpha-1}\big) I+\tfrac{1}{\alpha-1}\left(\begin{smallmatrix}
u_3^*\\ u_4^*
\end{smallmatrix}\right)\left(\begin{smallmatrix}
u_3&u_4
\end{smallmatrix}\right)\right) \\
=\ &\lambda^{\rk\pi(x_1)+\rk\pi(x_2)-n}\left(\lambda-\tfrac{\alpha}{\alpha-1}
\right)^{\rk\pi(x_3)+\rk\pi(x_4)-n}\det\left(\big(\lambda-\tfrac{\alpha}{\alpha-1}\big) I+\tfrac{1}{\alpha-1}\left(\begin{smallmatrix}
u_3&u_4
\end{smallmatrix}\right)\left(\begin{smallmatrix}
u_3^*\\ u_4^*
\end{smallmatrix}\right)\right) \\
=\ &\lambda^{\rk\pi(x_1)+\rk\pi(x_2)-n}\left(\lambda-\tfrac{\alpha}{\alpha-1}
\right)^{\rk\pi(x_3)+\rk\pi(x_4)-n}\det\left(\big(\lambda-\tfrac{\alpha}{\alpha-1}\big) I+\tfrac{1}{\alpha-1}\pi(x_3+x_4)\right).
\end{align*}
Therefore
$$\Lambda_{S(\pi)}=\{0\}\cup\big(\tfrac{\alpha}{\alpha-1}-\tfrac{1}{\alpha-1}\Lambda_\pi\big)$$
if $\rk\pi(x_1)+\rk\pi(x_2)>n=\rk\pi(x_3)+\rk\pi(x_4)$, and
$$\Lambda_{S(\pi)}=\left\{\tfrac{\alpha}{\alpha-1}\right\}\cup\big(\tfrac{\alpha}{\alpha-1}-\tfrac{1}{\alpha-1}\Lambda_\pi\big)$$
if $\rk\pi(x_3)+\rk\pi(x_4)>n=\rk\pi(x_1)+\rk\pi(x_2)$.
\end{proof}

The following proposition identifies all eigenvalues of the matrix $\fP_3^{(n)}+\fP_4^{(n)}$; in particular, they are all simple (pairwise distinct).

\begin{prop}\label{p:spectrum}
Eigenvalues of $n(\fP_3^{(n)}+\fP_4^{(n)})$
are $\{0,2,\dots,2n-2\}$ if $n$ is odd, and $\{1,3,\dots,2n-1\}$ if $n$ is even.
\end{prop}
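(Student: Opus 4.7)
The plan is induction on $n$. Let $\pi^{(n)}$ denote the irreducible representation of $\cA_{2-\frac{1}{n}}$ given by $\pi^{(n)}(x_i)=\fP_i^{(n)}$, so the target set is $n\cdot\Lambda_{\pi^{(n)}}$. The recursive construction in the proof of Proposition \ref{p:ukr} furnishes $\pi^{(n+1)}=\Phi^+(\pi^{(n)})$, and since $\Phi^+=S\circ T$ preserves the labeling of generators, I would feed this into Lemma \ref{l:spec_funct}(iii). With $\alpha=2-\frac{1}{n}$ one has $\frac{1}{3-\alpha}=\frac{n}{n+1}$, so the affine transformation to monitor is $t\mapsto\frac{1}{n+1}+\frac{n}{n+1}t$. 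The base case $n=1$ is immediate, as $\fP_3^{(1)}+\fP_4^{(1)}=0$.

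The first task in the inductive step is to identify which subcase of Lemma \ref{l:spec_funct}(iii) is active. Using the rank data from Proposition \ref{p:ukr}, for odd $n$ one has $\rk\fP_1^{(n)}+\rk\fP_2^{(n)}=n$ and $\rk\fP_3^{(n)}+\rk\fP_4^{(n)}=n-1$, so case (iii.b) applies and a new eigenvalue $1+\frac{1}{3-\alpha}=\frac{2n+1}{n+1}$ joins the spectrum; for even $n$ the pair roles swap, case (iii.a) fires, and the new eigenvalue is $0$. In either case, the remaining eigenvalues are the images of $\Lambda_{\pi^{(n)}}$ under the affine map above.

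To conclude, I would verify the arithmetic: applied to the odd-$n$ inductive set $\{0,\frac{2}{n},\dots,\frac{2n-2}{n}\}$, the affine map produces $\{\frac{1}{n+1},\frac{3}{n+1},\dots,\frac{2n-1}{n+1}\}$, and adjoining $\frac{2n+1}{n+1}$ followed by multiplication by $n+1$ gives $\{1,3,\dots,2n+1\}$, the claim for the even value $n+1$. Symmetrically, the even-$n$ set $\{\frac{1}{n},\dots,\frac{2n-1}{n}\}$ maps to $\{\frac{2}{n+1},\dots,\frac{2n}{n+1}\}$, and adjoining $0$ then rescaling yields $\{0,2,\dots,2n\}$, the claim for odd $n+1$. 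Distinctness, and hence simplicity of all $n+1$ eigenvalues, is clear since the new eigenvalue lies strictly outside the arithmetic progression supplied by the transformed inductive set, and the total count matches $\dim\pi^{(n+1)}=n+1$.

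The main obstacle is purely bookkeeping: correctly determining which of (iii.a) or (iii.b) is active at each parity from the rank data, and lining up the affine transformation with the arithmetic progression from the previous step so that the new eigenvalue slots in at the correct end. Once this is set up, the proof is a direct recursion with no analytic content, and in particular the characteristic-polynomial factorization appearing in the proof of Lemma \ref{l:spec_funct} automatically accounts for the extra linear factor produced by the rank mismatch.
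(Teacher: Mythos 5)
Your proposal follows the paper's own argument essentially verbatim: induct on $n$ via the Coxeter functor $\Phi^+$, use the rank data from Proposition \ref{p:ukr} to decide which subcase of Lemma \ref{l:spec_funct}(iii) applies at each parity, and track the arithmetic progression under the affine rescaling. The rank bookkeeping and the resulting eigenvalue sets all check out, so this is a correct proof by the same method.
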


\begin{proof}
Let $\pi_1:\cA_1\to\C$ be given as $\pi_1(x_1)=1$ and $\pi_1(x_2)=\pi(x_3)=\pi(x_4)=0$. For $n\ge 2$ denote $\pi_n=\Phi^+(\pi_1)$. By Proposition \ref{p:ukr} we have $\rk \pi_n(x_1)+\rk \pi_n(x_2)<n=\rk \pi_n(x_3)+\rk \pi_n(x_4)$ if $n$ is even, and $\rk \pi_n(x_3)+\rk \pi_n(x_4)<n=\rk \pi_n(x_1)+\rk \pi_n(x_2)$ if $n$ is odd.
By Lemma \ref{l:spec_funct},
\begin{alignat*}{2}
\Lambda_{\pi_{n+1}}&=\{0\}\cup\left(
\tfrac{1}{n+1}+\tfrac{n}{n+1}\Lambda_{\pi_n}
\right) \qquad &&\text{if }n \text{ is even,}\\
\Lambda_{\pi_{n+1}}&=\{2-\tfrac{1}{n+1}\}\cup\left(
\tfrac{1}{n+1}+\tfrac{n}{n+1}\Lambda_{\pi_n}
\right) \qquad &&\text{if }n \text{ is odd.}
\end{alignat*}
Therefore
\begin{alignat*}{2}
(n+1)\Lambda_{\pi_{n+1}}&=\{0\}\cup\left(
1+n\Lambda_{\pi_n}
\right) \qquad &&\text{if }n \text{ is even,}\\
(n+1)\Lambda_{\pi_{n+1}}&=\{2n+1\}\cup\left(
1+n\Lambda_{\pi_n}
\right) \qquad &&\text{if }n \text{ is odd.}
\end{alignat*}
Since $\Lambda_{\pi_1}=\{0\}$, induction on $n$ shows that
\begin{alignat*}{2}
n\Lambda_{\pi_n}&=\{0,2,\dots,2n-2\}\qquad &&\text{if }
n \text{ is odd,}\\
n\Lambda_{\pi_n}&=\{1,3,\dots,2n-1\}\qquad &&\text{if }
n \text{ is even.}
\end{alignat*}
Finally, $\fP_3^{(n)},\fP_4^{(n)}$ are simultaneously unitarily equivalent to $\pi_n(x_3),\pi_n(x_4)$.
\end{proof}

Lastly, we determine how eigenvectors of $\fP_3^{(n)}+\fP_4^{(n)}$ interact with $\fP_1^{(n)}$ and $\fP_2^{(n)}$.

\begin{prop}\label{p:34to12}
Let $\lambda$ be an  eigenvalue of $\fP_3^{(n)}+\fP_4^{(n)}$, with a corresponding unit eigenvector $\ket{e}\in\R^n$.
\begin{enumerate}[(i)]
\item If $\lambda\neq1-\frac1n$ then
$$\bra{e}\fP^{(n)}_1\ket{e}=\bra{e}\fP^{(n)}_2\ket{e}
=1-\frac{1}{2n}-\frac{\lambda}{2}.$$
\item If $\lambda=1-\frac1n$ then
$$\bra{e}\fP^{(n)}_1\ket{e}=\left\{
\begin{array}{ll}
0 &  \text{if }n\text{ even,}\\
1 &  \text{if }n\text{ odd,}
\end{array}
\right.\qquad
\bra{e}\fP^{(n)}_2\ket{e}=\left\{
\begin{array}{ll}
1 &  \text{if }n\text{ even,}\\
0 &  \text{if }n\text{ odd.}
\end{array}
\right.
$$
\end{enumerate}
\end{prop}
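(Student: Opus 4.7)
Since $\fP^{(n)}_1+\fP^{(n)}_2+\fP^{(n)}_3+\fP^{(n)}_4=(2-\tfrac{1}{n})I$, the operators $\fP^{(n)}_1+\fP^{(n)}_2$ and $\fP^{(n)}_3+\fP^{(n)}_4$ commute. Proposition~\ref{p:spectrum} tells us that $\fP^{(n)}_3+\fP^{(n)}_4$ has simple spectrum, so I would fix a real orthonormal eigenbasis $\{\ket{e_k}\}$ simultaneously diagonalizing both sums. In this basis $D:=\fP^{(n)}_1+\fP^{(n)}_2=\operatorname{diag}(d_k)$ with $d_k=2-\tfrac{1}{n}-\lambda_k$, where $\lambda_k$ is the eigenvalue of $\fP^{(n)}_3+\fP^{(n)}_4$ at $\ket{e_k}$. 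Writing $M:=\fP^{(n)}_1$ (so $\fP^{(n)}_2=D-M$), the pair of projection identities $M^2=M$ and $(D-M)^2=D-M$ combine (by expanding the second and substituting the first) into the operator equation
\begin{equation*}
DM+MD=D^2-D+2M.
\end{equation*}

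For part~(i), I would read off the $(k,k)$-entry of this equation, exploiting the diagonality of $D$: the left side contributes $2d_k M_{kk}$ and the right contributes $d_k^2-d_k+2M_{kk}$, yielding $2(d_k-1)\bigl(M_{kk}-\tfrac{d_k}{2}\bigr)=0$. Hence whenever $d_k\neq 1$—equivalently $\lambda_k\neq 1-\tfrac{1}{n}$—this forces $M_{kk}=d_k/2$. Since $M_{kk}=\bra{e_k}\fP^{(n)}_1\ket{e_k}$ and $d_k-M_{kk}=\bra{e_k}\fP^{(n)}_2\ket{e_k}$, both quantities equal $d_k/2=1-\tfrac{1}{2n}-\tfrac{\lambda_k}{2}$, giving the stated formula.

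For part~(ii), at the special eigenvalue $\lambda_*=1-\tfrac{1}{n}$ we have $d_*=1$ and the above diagonal equation becomes vacuous, so $M_{**}$ must be pinned down by a separate (global) relation. My plan is to use the trace identity $\sum_k M_{kk}=\tr\fP^{(n)}_1=\rk\fP^{(n)}_1$. Substituting $M_{kk}=d_k/2$ for $k\neq *$ and $\tr D=\rk\fP^{(n)}_1+\rk\fP^{(n)}_2$ yields
\begin{equation*}
M_{**}=\rk\fP^{(n)}_1-\tfrac{1}{2}\bigl(\rk\fP^{(n)}_1+\rk\fP^{(n)}_2-1\bigr).
\end{equation*}
Inserting the explicit ranks $\rk\fP^{(n)}_1=\lfloor n/2\rfloor-(-1)^n$ and $\rk\fP^{(n)}_2=\lfloor n/2\rfloor$ from Proposition~\ref{p:ukr} and splitting by the parity of $n$, a direct computation gives $M_{**}=1$ for $n$ odd and $M_{**}=0$ for $n$ even; combined with $\bra{e}\fP^{(n)}_2\ket{e}=1-M_{**}$ this delivers the tabulated values.

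The one non-routine step, and what I expect to be the main conceptual obstacle, is the algebraic observation that the two projection identities together rigidly determine the diagonal entries of $M$ in the eigenbasis of $D$ at every eigenvalue of $D$ different from $1$. This shortcut sidesteps what would otherwise be a rather intricate induction tracking how eigenvectors of $\fP^{(n)}_3+\fP^{(n)}_4$ transform under the Coxeter functor $\Phi^+$ that recursively produces the matrices $\fP^{(n)}_i$; the simplicity of the spectrum from Proposition~\ref{p:spectrum} is what makes both the diagonalization and the uniqueness of the special index $*$ work without further bookkeeping.
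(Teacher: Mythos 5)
Your proof is correct and follows essentially the same approach as the paper: for part (i) you exploit the projection identities and the fact that $\fP^{(n)}_1+\fP^{(n)}_2$ acts as a scalar on $\ket{e}$, and for part (ii) you use the trace/rank count over the full eigenbasis (exactly the paper's strategy). The only difference is cosmetic — you package part (i) as a global operator identity $DM+MD=D^2-D+2M$ read off in the eigenbasis, whereas the paper multiplies the eigenvalue relation on $\ket{e}$ by $\bra{e}\fP^{(n)}_1$ and $\bra{e}\fP^{(n)}_2$ and subtracts, using reality of the matrices to equate $\bra{e}\fP^{(n)}_1\fP^{(n)}_2\ket{e}$ with $\bra{e}\fP^{(n)}_2\fP^{(n)}_1\ket{e}$; these are two renderings of the same computation.
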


\begin{proof}
(i) By the defining relation of $\fP^{(n)}_i$,
\begin{equation}\label{e:releig}
\fP^{(n)}_1\ket{e}+\fP^{(n)}_2\ket{e}+\lambda\ket{e}=
\left(2-\frac1n\right)\ket{e}.
\end{equation}
Multiplying \eqref{e:releig} on the left with $\bra{e}\fP^{(n)}_i$ for $i=1,2$ results in
\begin{align*}
\bra{e}\fP^{(n)}_1\ket{e}
+\bra{e}\fP^{(n)}_1\fP^{(n)}_2\ket{e}&=
\left(2-\frac1n-\lambda\right)\bra{e}\fP^{(n)}_1\ket{e},
\\
\bra{e}\fP^{(n)}_2\fP^{(n)}_1\ket{e}
+\ket{e}\fP^{(n)}_2\ket{e}&=
\left(2-\frac1n-\lambda\right)\bra{e}\fP^{(n)}_2\ket{e}.
\end{align*}
Therefore $\bra{e}\fP^{(n)}_1\ket{e}=\bra{e}\fP^{(n)}_2\ket{e}$ if $\lambda\neq1-\frac1n$.
Multiplying \eqref{e:releig} on the left with $\bra{e}$ then gives $\bra{e}\fP^{(n)}_1\ket{e}=\bra{e}\fP^{(n)}_2\ket{e}=1-\frac{1}{2n}-\frac{\lambda}{2}$.

(ii) Note that $\fP_3^{(n)}+\fP_4^{(n)}$ admits $n$ orthonormal eigenvectors $\ket{e_1},\dots,\ket{e_n}\in\R^n$ by Proposition \ref{p:spectrum}. Hence
$$\tr \fP_i^{(n)} = \sum_{k=1}^n\bra{e_k}\fP^{(n)}_i\ket{e_k}$$
for $i=1,2$. By (ii) and Proposition \ref{p:ukr} we therefore have
\begin{align*}
\bra{e}\fP^{(n)}_i\ket{e}&
=\tr \fP_i^{(n)}-(n-1)\left(1-\frac{1}{2n}\right)+\frac12\left(
\tr \Big(\fP_3^{(n)}+ \fP_4^{(n)}\Big)-1+\frac1n
\right)\\
&=2\left\lfloor\frac{n}{2}\right\rfloor-n+1-\left\{
\begin{array}{ll}
(-1)^n &  \text{if }i=1\\
0 &   \text{if }i=2
\end{array}
\right.
\end{align*}
since $\tr \fP^{(n)}_{i}=\rk \fP^{(n)}_{i}$.
\end{proof}

\section{Constant-sized self-tests}\label{s:st}

In this section we derive the main results of the paper: every maximally entangled state is self-tested by a 4-input 2-output strategy (Subsection \ref{ss:state}), and every single binary PVM is self-tested by a 5-input 2-output strategy (Subsection \ref{ss:meas}).

\subsection{Self-testing maximally entangled states}\label{ss:state}

First we introduce a family of 4-input 2-output strategies that self-test maximally entangled states of all dimensions (Theorem \ref{t:Sn_self_test}).

\begin{defn}\label{d:4in}
For $n\in\N$ let $\fP^{(n)}_i$ be the $n\times n$ projections as in Proposition \ref{p:ukr}. Let $\cS_n$ be the 4-input 2-output bipartite strategy
$$\cS_n=\left(
\ket{\phi_n};
\left(\fP^{(n)}_i,I-\fP^{(n)}_i\right)_{i=1}^4;
\left(\fP^{(n)}_i,I-\fP^{(n)}_i\right)_{i=1}^4
\right).$$
\end{defn}

Note that the correlation of $\cS_{n}$ is synchronous,  $p(a,b|i,i)=\tau\left(\fP^{(n)}_i(I-\fP^{(n)}_i)\right)=0$ for $a\neq b$.
Furthermore,
\begin{align*}
	p(1,1|i,j)&=\bra{\phi_n}\fP^{(n)}_i\otimes \fP^{(n)}_j\ket{\phi_n} =\tau\left(\fP^{(n)}_i\fP^{(n)}_j\right),\\
	p(1|i)&=\bra{\phi_n}\fP^{(n)}_i\otimes I\ket{\phi_n}=\bra{\phi_n}I\otimes \fP^{(n)}_i\ket{\phi_n}=\tau\left(\fP^{(n)}_i\right)
\end{align*}
for $i,j=1,\dots,4$, and these values are computed in Remark \ref{r:trace}. 
Comprising everything together, the correlation of $\cS_{n}$ is determined by the vector
$$\big(p(1|i)\big)_{i=1}^4 = \begin{pmatrix}
	\frac{\lfloor\frac{n}{2}\rfloor-(-1)^n}{n} &
	\frac{\lfloor\frac{n}{2}\rfloor}{n} &
	\frac{\lfloor\frac{n}{2}\rfloor}{n} &
	\frac{\lfloor\frac{n}{2}\rfloor}{n}
\end{pmatrix}$$
and the symmetric matrix $\big(p(1,1|i,j)\big)_{i,j=1}^4$
$$
\left(\begin{array}{cccc}
	\frac{\lfloor\frac{n}{2}\rfloor-(-1)^n}{n} & \frac{(n-1)(\lfloor\frac{n}{2}\rfloor-(-1)^n)}{3n^2} & \frac{(n-1)(\lfloor\frac{n}{2}\rfloor-(-1)^n)}{3n^2} & \frac{(n-1)(\lfloor\frac{n}{2}\rfloor-(-1)^n)}{3n^2} \\[5pt]
	\cdot &
	\frac{\lfloor\frac{n}{2}\rfloor}{n} &
	\frac{(n-1)(2n-1+3(-1)^n)}{12n^2} &
	\frac{(n-1)(2n-1+3(-1)^n)}{12n^2} \\[5pt]
	\cdot &
	\cdot &
	\frac{\lfloor\frac{n}{2}\rfloor}{n} &
	\frac{(n-1)(2n-1+3(-1)^n)}{12n^2} \\[5pt]
	\cdot & \cdot & \cdot &
	\frac{\lfloor\frac{n}{2}\rfloor}{n} \\
\end{array}\right).
$$
Notice that while a closed-form expression for the strategy $\cS_n$ has not been given (instead, the projections in $\cS_n$ can be recursively constructed as in Remark {\ref{r:recursive}}), its correlation admits a closed-form expression (as a function of $n$).

The next theorem establishes that $\cS_n$ is a local dilation of any strategy $\cS$ that produces the same correlation as $\cS_n$. The blueprint for the proof is threefold. Firstly, the correlation manages to encode the defining linear relation of measurements in $\cS_n$, which leads to measurements of $\cS$ essentially forming a representation of $\cA_{2-\frac1n}$. Secondly, the established relationship between the maximally entangled state and representations of $\cA_{2-\frac1n}$ (Proposition {\ref{p:eigvec}}) allows one to identify the state in $\cS$. Thirdly, the finer look at the correlation shows that the representation of $\cA_{2-\frac1n}$  arising from measurements of $\cS$ cannot be an direct sum of the different irreducible representations, but is actually a direct copy of the irreducible representation coming from $\cS_n$.

\begin{thm}\label{t:Sn_self_test}
The strategy $\cS_n$ is self-tested by its correlation for every $n\in\N$.
\end{thm}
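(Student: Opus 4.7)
The plan is to show that any strategy $\tilde\cS=(\ket{\tilde\psi};(\tilde M_i,I-\tilde M_i);(\tilde N_j,I-\tilde N_j))$ with the same correlation as $\cS_n$ is a local dilation of $\cS_n$. First I would make standard reductions: via Naimark's theorem assume $\tilde M_i,\tilde N_j$ are projections, and by restricting to the supports of the reduced density matrices of $\ket{\tilde\psi}$ assume it has full Schmidt rank, so in particular $\dim\cH_A=\dim\cH_B$. Synchronicity of the correlation means $\tilde M_i\otimes(I-\tilde N_i)$ and $(I-\tilde M_i)\otimes\tilde N_i$ are positive operators with zero expectation in $\ket{\tilde\psi}$; hence they annihilate it, yielding the intertwining $(\tilde M_i\otimes I)\ket{\tilde\psi}=(I\otimes\tilde N_i)\ket{\tilde\psi}$, equivalently $\tilde M_i D=D\tilde N_i^{\ti}$ where $D=\mat(\ket{\tilde\psi})$.

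Next I would extract the defining linear relation of $\cA_{2-\frac1n}$. By the intertwining, $\bra{\tilde\psi}\tilde M_i\tilde M_j\otimes I\ket{\tilde\psi}=\bra{\tilde\psi}\tilde M_i\otimes\tilde N_j\ket{\tilde\psi}=p(1,1|i,j)$, and so
\[
\Bigl\|\Bigl(\sum_{i=1}^4\tilde M_i-(2-\tfrac1n)I\Bigr)\otimes I\ket{\tilde\psi}\Bigr\|^2=\sum_{i,j}p(1,1|i,j)-2(2-\tfrac1n)\sum_i p(1|i)+(2-\tfrac1n)^2.
\]
Plugging in the correlation of $\cS_n$ (and noting $\sum_{i,j}\tau(\fP^{(n)}_i\fP^{(n)}_j)=\tau((\sum_i\fP^{(n)}_i)^2)=(2-\tfrac1n)^2$ and $\sum_i\tau(\fP^{(n)}_i)=2-\tfrac1n$) makes the right-hand side vanish. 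Full Schmidt rank then forces $\sum_i\tilde M_i=(2-\tfrac1n)I$, so $\tilde M_1,\dots,\tilde M_4$ give a representation $\tilde\pi_A$ of $\cA_{2-\frac1n}$; symmetrically $\tilde N_1,\dots,\tilde N_4$ give a representation $\tilde\pi_B$. By Proposition \ref{p:ukr},
\[
\tilde\pi_A\simeq\bigoplus_{k=1}^4\rho_k^{\oplus a_k},\qquad \tilde\pi_B\simeq\bigoplus_{k=1}^4\rho_k^{\oplus b_k},
\]
where $\rho_1,\dots,\rho_4$ denote the four cyclic-shift irreducibles of $\cA_{2-\frac1n}$ with generators $\fP^{(n)}_{\sigma_k(i)}$, using the same enumeration $\sigma_1,\dots,\sigma_4$ on both sides and $\sigma_1=\id$.

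To identify the state I would invoke Proposition \ref{p:eigvec}. The correlation gives
\[
\bra{\tilde\psi}\tfrac{n}{2n-1}\sum_{i=1}^4\tilde M_i\otimes\tilde N_i\ket{\tilde\psi}=\tfrac{n}{2n-1}\sum_i p(1,1|i,i)=\tfrac{n}{2n-1}\sum_i\tau(\fP^{(n)}_i)=1,
\]
which is the top of the spectrum of the operator in Proposition \ref{p:eigvec}. Hence $\ket{\tilde\psi}$ lies in the corresponding eigenspace, so after the canonical reshuffling of tensor factors
\[
\ket{\tilde\psi}=\big(\ket{\aux_1}\oplus\ket{\aux_2}\oplus\ket{\aux_3}\oplus\ket{\aux_4}\big)\otimes\ket{\phi_n},\qquad \ket{\aux_j}\in\C^{a_j}\otimes\C^{b_j}.
\]

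Finally I would pin down which cyclic shift survives. Setting $\lambda_j:=\|\ket{\aux_j}\|^2$ and computing $p(1|i)=\bra{\tilde\psi}\tilde M_i\otimes I\ket{\tilde\psi}$ block-by-block yields
\[
\tfrac{1}{n}\sum_{j=1}^4\lambda_j\,\rk\fP^{(n)}_{\sigma_j(i)}=\tau(\fP^{(n)}_i),\qquad i=1,2,3,4,
\]
whose coefficient matrix is (up to a permutation of rows and columns) the invertible matrix of Lemma \ref{l:invert}. The unique solution is $\lambda_1=1$ and $\lambda_2=\lambda_3=\lambda_4=0$; full Schmidt rank then forces $a_k=b_k=0$ for $k\neq 1$, and $a_1=b_1$. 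Consequently $\tilde M_i=I_{a_1}\otimes\fP^{(n)}_i$, $\tilde N_j=I_{a_1}\otimes\fP^{(n)}_j$ and $\ket{\tilde\psi}=\ket{\aux}\otimes\ket{\phi_n}$ in suitable bases, and swapping tensor factors supplies the isometries $U_A,U_B$ exhibiting $\tilde\cS$ as a local dilation of $\cS_n$. I expect the main obstacle to be the third paragraph: carefully aligning the block decompositions of $\tilde\pi_A$ and $\tilde\pi_B$ (using the intertwining $\tilde M_i D=D\tilde N_i^{\ti}$) so that Proposition \ref{p:eigvec} applies with matched cyclic labelings, and the bookkeeping of tensor-factor reshuffling between the form provided by the proposition and the form required by the definition of a local dilation.
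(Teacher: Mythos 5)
Your proof is correct and follows essentially the same route as the paper's: reduce to the synchronous, projective, full-Schmidt-rank case, extract the defining relation of $\cA_{2-\frac1n}$ from the correlation, invoke Proposition \ref{p:eigvec} to locate the state in the top eigenspace, and use the linear system with Lemma \ref{l:invert} to kill the off-cyclic multiplicities. The obstacle you flag at the end is not actually one: Proposition \ref{p:eigvec} already allows independent choices of cyclic labelings $a_j$ and $b_k$ on the two sides and shows the eigenspace has no cross-blocks with $j\neq k$, so no alignment via the intertwining $\tilde M_i D=D\tilde N_i^{\ti}$ is required, and the first-paragraph reductions are exactly what the paper outsources to \cite[Lemma 4.9 and Corollary 3.6]{mancinska}.
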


\begin{proof}
Let $p$ be the correlation of $\cS_n$.
Suppose
$$\cS=\left(
\ket{\psi};(P_i,I-P_i)_{i=1}^4;(Q_i,I-Q_i)_{i=1}^4
\right)$$
is another strategy with the correlation $p$. Since $p$ is synchronous and local dilations are transitive, by \cite[Lemma 4.9 and Corollary 3.6]{mancinska} it suffices to assume that the state $\ket{\psi}\in\cH\otimes \cH$ has full Schmidt rank, $P_i,Q_i$ are projections on $\cH$, and
\begin{equation}\label{e:LR}
P_i\otimes I\ket{\psi} = I\otimes Q_i\ket{\psi}
\end{equation}
for $i=1,\dots,4$.
By equality of correlations and \eqref{e:LR},
\begin{align*}
&\bra{\psi}\left(\frac{2n-1}{n}I-\sum_{i=1}^4P_i\right)^2\otimes I\ket{\psi}\\
=\,&\bra{\psi}\left(\frac{2n-1}{n}I-\sum_{i=1}^4P_i\right)\otimes \left(\frac{2n-1}{n}I-\sum_{i=1}^4Q_i\right)\ket{\psi} \\
=\,&\bra{\phi_n}\left(\frac{2n-1}{n}I-\sum_{i=1}^4\fP^{(n)}_i\right)\otimes \left(\frac{2n-1}{n}I-\sum_{i=1}^4\fP^{(n)}_i\right)\ket{\phi_n} =0,
\end{align*}
and analogously for $Q_i$.
Since $\ket{\psi}$ has full rank, we obtain
\begin{equation}\label{e:idalg}
\frac{2n-1}{n}I-\sum_{i=1}^4P_i=0=
\frac{2n-1}{n}I-\sum_{i=1}^4Q_i.
\end{equation}
Furthermore,
\begin{equation}\label{e:ideig}
\bra{\psi}\frac{n}{2n-1}\sum_{i=1}^4P_i\otimes Q_i\ket{\psi} = 
\bra{\phi_n}\frac{n}{2n-1}\sum_{i=1}^4\fP^{(n)}_i\otimes \fP^{(n)}_i\ket{\phi_n} = 1.
\end{equation}
Let $\sigma_1,\dots,\sigma_4$ be the distinct cyclic permutations of $(1,2,3,4)$, with $\sigma_1=\id$.
By \eqref{e:idalg} and Proposition \ref{p:ukr} there exist nonnegative integers $a_1,\dots,a_4,b_1,\dots,b_4$ with $a_1+\cdots+a_4=b_1+\cdots+b_4$, and 
unitaries $U$ and $V$ on $\cH$, such that
$$
UP_iU^*=\bigoplus_{j=1}^4 I_{a_j}\otimes
\fP^{(n)}_{\sigma_j(i)},\qquad
VQ_iV^*=\bigoplus_{j=1}^4 I_{b_j}\otimes\fP^{(n)}_{\sigma_j(i)}
$$
for $i=1,\dots,4$.
By \eqref{e:ideig} and Proposition \ref{p:eigvec},
$$U\otimes V\ket{\psi}=\big(
\ket{\aux_1}\oplus\ket{\aux_2}\oplus\ket{\aux_3}\oplus\ket{\aux_4}
\big)\otimes \ket{\phi_n}$$
for some $\ket{\aux_j}\in\C^{a_j}\otimes \C^{b_j}$, where we identified
$$\cH\otimes\cH\equiv \left(\bigoplus_{j,k=1}^4
\C^{a_j}\otimes \C^{b_k}\right)
\otimes (\C^n\otimes\C^n).$$
Then
$$\bra{\phi_n}\fP^{(n)}_i\otimes I\ket{\phi_n} =\bra{\psi}P_i\otimes I\ket{\psi}
=\sum_{j=1}^4\braket{\aux_j}{\aux_j}\bra{\phi_n}\fP^{(n)}_{\sigma_j(i)}\otimes I\ket{\phi_n}$$
gives rise to a linear system of equations in $\braket{\aux_j}{\aux_j}$,
\begin{equation}\label{e:linsys}
\rk\fP^{(n)}_i
=\sum_{j=1}^4\rk\fP^{(n)}_{\sigma_j(i)}\cdot \braket{\aux_j}{\aux_j}
\qquad \text{for }i=1,2,3,4.
\end{equation}
By Lemma \ref{l:invert}, the system \eqref{e:linsys} has a unique solution; since $\sigma_1=\id$, we obtain $\braket{\aux_1}{\aux_1}=1$ and $\braket{\aux_j}{\aux_j}=0$ for $j=2,3,4$.
Since $\ket{\psi}$ is a faithful state, it follows that $a_j=b_j=0$ for $j=2,3,4$, and $a_1=b_1$.
Therefore
$$UP_iU^*=I_{a_1}\otimes\fP^{(n)}_i,\qquad
VQ_iV^*=I_{a_1}\otimes\fP^{(n)}_i,\qquad
U\otimes V\ket{\psi}=\ket{\aux_1}\otimes\ket{\phi_n},
$$
so $\cS_n$ is a local dilation of $\cS$.
\end{proof}

\begin{rem}\label{r:robust}
The proof of Theorem \ref{t:Sn_self_test} follows the core ideas of the proof of \cite[Corollary 7.1]{mancinska}, which treats maximally entangled states of odd dimension. The main difference arises from applying the representation theory of C*-algebras $\cA_\alpha$ for different values of $\alpha$. Namely, in \cite{mancinska} the authors focus on $\cA_{2-\frac2n}$ for odd $n$ (and their analogs on more than four generators), since $\cA_{2-\frac2n}$ for odd $n$ is simple and isomorphic to $\mtxc{n}$ (i.e., it has a unique irreducible representation, which is $n$-dimensional). On the other hand, algebras $\cA_{2-\frac1n}$ for $n\in\N$ are not simple, as they are isomorphic to $\C^4\otimes \mtxc{n}$.
Non-simplicity is the origin of intricacies in the proof of Theorem \ref{t:Sn_self_test} and auxiliary results.

Finally, with a considerable effort, the authors of \cite{mancinska} also establish that their self-tests are \emph{robust}. Such robustness analysis is omitted in this paper; nevertheless, there is no obstruction for the techniques of \cite[Section 6]{mancinska} to imply robust versions of the newly presented self-tests.
\end{rem}

\begin{cor}\label{c:st4}
The following states and binary projective measurements can be self-tested by 4-input 2-output bipartite strategies for every $n\in\N$:
\begin{enumerate}[(a)]
	\item maximally entangled state of local dimension $n$;
	\item binary projective measurement determined by an $n\times n$ projection with rank in
	$$\left\{
	\left\lceil\frac{n}{2}\right\rceil,\
	\left\lfloor\frac{n}{2}\right\rfloor-(-1)^n, \
	\left\lceil\frac{n}{2}\right\rceil+(-1)^n
	\right\}.$$
\end{enumerate}
\end{cor}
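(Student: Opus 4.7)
The plan is to view this corollary as a direct unpacking of Theorem \ref{t:Sn_self_test}, since the strategies $\cS_n$ themselves already exhibit the desired state and measurements. First I would recall that for a state (resp.\ measurement) to be self-tested it suffices that it appears in some strategy that is self-tested by its correlation, as indicated in the discussion of problem $(\star)$ in the introduction. Then by Theorem \ref{t:Sn_self_test}, the 4-input 2-output strategy $\cS_n$ of Definition \ref{d:4in} qualifies.

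For part (a) the conclusion is immediate: the bipartite state in $\cS_n$ is by construction the canonical maximally entangled state $\ket{\phi_n}$ of local dimension $n$, so $\ket{\phi_n}$ is self-tested via the 4-input 2-output strategy $\cS_n$.

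For part (b) I would simply tabulate the ranks of the eight projections $\fP^{(n)}_i$ and $I-\fP^{(n)}_i$ that define the binary PVMs in $\cS_n$. Proposition \ref{p:ukr} gives $\rk \fP^{(n)}_1=\lfloor n/2\rfloor-(-1)^n$ and $\rk \fP^{(n)}_i=\lfloor n/2\rfloor$ for $i=2,3,4$, so the complements have ranks $n-\lfloor n/2\rfloor+(-1)^n=\lceil n/2\rceil+(-1)^n$ and $n-\lfloor n/2\rfloor=\lceil n/2\rceil$. Collecting these and observing that $\lfloor n/2\rfloor$ coincides with $\lceil n/2\rceil+(-1)^n$ when $n$ is odd and with $\lceil n/2\rceil$ when $n$ is even, yields exactly the three-element set
\[
\left\{\left\lceil\tfrac{n}{2}\right\rceil,\ \left\lfloor\tfrac{n}{2}\right\rfloor-(-1)^n,\ \left\lceil\tfrac{n}{2}\right\rceil+(-1)^n\right\}.
\]
Since each of these ranks is realized by a projection whose associated binary PVM appears in the self-tested strategy $\cS_n$, part (b) follows.

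There is really no obstacle here: the work was done in Theorem \ref{t:Sn_self_test} and Proposition \ref{p:ukr}. The only mild point to be careful about is the index/orientation convention for binary PVMs, namely that both $(P,I-P)$ and $(I-P,P)$ occur as marginal measurements in the overall strategy, so both the rank of $\fP^{(n)}_i$ and the rank of $I-\fP^{(n)}_i$ may legitimately be read off as self-tested ranks.
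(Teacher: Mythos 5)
Your proposal is correct and takes the same route the paper implicitly uses: Corollary \ref{c:st4} is stated without an explicit proof precisely because it reduces to reading off the state $\ket{\phi_n}$ and the ranks $\rk \fP^{(n)}_i$, $\rk(I-\fP^{(n)}_i)$ from the self-tested strategy $\cS_n$ via Theorem \ref{t:Sn_self_test} and Proposition \ref{p:ukr}, and your rank tabulation (including the coincidences for $n$ even and $n$ odd) is accurate. One small wording nit: $(I-\fP^{(n)}_i,\fP^{(n)}_i)$ does not literally occur in $\cS_n$; rather, the complement ranks qualify because a binary PVM determined by a rank-$r$ projection coincides, up to reordering of outputs, with one determined by a rank-$(n-r)$ projection, which is the convention the paper adopts (see the proof of Corollary \ref{c:st5}).
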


\subsection{Self-testing local projective measurements}\label{ss:meas}

Next we introduce a two-parametric family of 5-input 2-output strategies that self-test binary PVMs of all dimensions and ranks (Theorem {\ref{t:Snr_self_test}}).
These strategies are obtained from the 4-input 2-output strategies of Subsection {\ref{ss:state}} by adding an additional binary PVM. The phenomenon, where a self-tested strategy is extended to a new one while preserving the self-testing feature, is called post-hoc self-testing \mbox{\cite{supic}}. 
The key sufficiency condition for post-hoc self-testing was derived in \mbox{\cite{CMV}}, and is presented next.

Given an invertible hermitian matrix $X\in\mtxc{n}$ let $\sgn(X)\in\mtxc{n}$ be the unique hermitian unitary matrix that commutes with $X$, and $\sgn(X)X\succ0$. Equivalently, $\sgn(X)$ is the unitary part of the polar decomposition of $X$. In other words, $\sgn$ is the matrix extension of the usual sign function via functional calculus. This map plays a role in the following post-hoc self-testing criterion established in \cite{CMV}.

\begin{prop}{{\cite[Proposition 3.7]{CMV}}}\label{p:chen}
Suppose $P,P_i,Q_j\in\mtxr{n}$ for $i=1,\dots,N_A$ and $j=1,\dots,N_B$ are projections, and the $(N_A,N_B)$-input $(2,2)$-output strategy
$$\left(\ket{\phi_n};
\left(P_i,I-P_i\right)_{i=1}^{N_A}; 
\left(Q_i,I-Q_i\right)_{i=1}^{N_B}
\right)$$
is self-tested by its correlation. If
$$2P-I\in\sgn\Big(
\glr{n}\cap\spa_\R\{I,Q_1,\dots,Q_{N_B}\}
\Big),$$
then the $(N_A+1,N_B)$-input $(2,2)$-output  strategy
$$\left(\ket{\phi_n};
\left(P_i,I-P_i\right)_{i=1}^{N_A},(P,I-P); 
\left(Q_i,I-Q_i\right)_{i=1}^{N_B}
\right)$$
is self-tested by its correlation.
\end{prop}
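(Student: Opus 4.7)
The plan is to show that any strategy $\cS=(\ket{\psi};(P_i',I-P_i')_{i=1}^{N_A},(P',I-P');(Q_j',I-Q_j')_{j=1}^{N_B})$ sharing the correlation of the extended reference strategy admits a local dilation to that strategy. Restricting $\cS$ to its first $N_A$ Alice-side measurements yields a strategy whose correlation matches that of the given self-tested $(N_A,N_B)$-input strategy, so by hypothesis there already exist isometries $U_A,U_B$ and $\ket{\aux}\in\cK_A\otimes\cK_B$ satisfying $(U_A\otimes U_B)(P_i'\otimes Q_j')\ket{\psi}=(P_i\otimes Q_j)\ket{\phi_n}\otimes\ket{\aux}$. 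Only the new measurement $(P',I-P')$ needs to be absorbed into this dilation.

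The main idea is that the sign-function hypothesis makes $P$ a continuous function of Bob's reference operators: since $X=c_0 I+\sum_j c_jQ_j\in\glr{n}$ is invertible, the function $f:=\chi_{(0,\infty)}$ is continuous on $\operatorname{spec}(X)$ and $P=f(X)$. Mirror this on Bob's side of $\cS$ by setting $X':=c_0 I+\sum_j c_jQ_j'$ and $P^B:=f(X')$ via continuous functional calculus. Extending the Bob-side dilation to polynomials in the $Q_j'$ and then by continuity to $f$ gives $(U_A\otimes U_B)(I\otimes P^B)\ket{\psi}=(I\otimes P)\ket{\phi_n}\otimes\ket{\aux}$. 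Because $P\in\mtxr{n}$ is symmetric, the maximally entangled state satisfies $(P\otimes I)\ket{\phi_n}=(I\otimes P^\ti)\ket{\phi_n}=(I\otimes P)\ket{\phi_n}$, which matches the target on Alice's side. The whole proof thus reduces to the vector identity
\[
(P'\otimes I)\ket{\psi}=(I\otimes P^B)\ket{\psi}, \qquad (\star)
\]
since multiplying $(\star)$ by $I\otimes Q_j'$ and applying $U_A\otimes U_B$ delivers $(U_A\otimes U_B)(P'\otimes Q_j')\ket{\psi}=(P\otimes Q_j)\ket{\phi_n}\otimes\ket{\aux}$.

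To establish $(\star)$, I would compute the squared norm of the difference. Since $P'\otimes I$ and $I\otimes P^B$ are commuting projections,
\[
\bigl\|(P'\otimes I-I\otimes P^B)\ket{\psi}\bigr\|^2=\bra{\psi}P'\otimes I\ket{\psi}+\bra{\psi}I\otimes P^B\ket{\psi}-2\bra{\psi}P'\otimes P^B\ket{\psi}.
\]
The first term equals $\tr(P)/n$ by the equality of correlations; the second equals $\tr(P)/n$ because the isometry preserves norms and $I\otimes P^B$ dilates to $I\otimes P$; for the third, the extended correlation $\bra{\psi}(P'\otimes Q_j')\ket{\psi}=\bra{\phi_n}(P\otimes Q_j)\ket{\phi_n}$ is propagated by linearity and continuous functional calculus in $X'$ to yield $\bra{\psi}(P'\otimes P^B)\ket{\psi}=\bra{\phi_n}(P\otimes P)\ket{\phi_n}=\tr(P^2)/n=\tr(P)/n$. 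All three contributions coinciding forces the squared norm to vanish.

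The main obstacle is precisely this last step: the raw Born data only supplies single-operator moments $\bra{\psi}(P'\otimes Q_j')\ket{\psi}$, whereas $P^B=f(X')$ is nonlinear in Bob's operators. One must propagate the dilation from single operators to arbitrary polynomials in $X'$ and then, by continuity, to $f(X')$. The standard device—after reducing to a state of full Schmidt rank—is to upgrade the vector-level dilation to an operator identity on the support of Bob's reduced density matrix, so that $U_B g(X')U_B^*=g(X)\otimes I_{\cK_B}$ for every continuous $g$, including $f$. Once this upgrade is in place, both the cross-term computation and the final extension of the dilation proceed routinely.
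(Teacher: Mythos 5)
The paper does not prove this proposition; it is imported as a black box from \cite[Proposition~3.7]{CMV}, so what follows is an assessment of your argument on its own terms.

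Your framing is right: reduce to the vector identity $(P'\otimes I)\ket{\psi}=(I\otimes P^B)\ket{\psi}$ with $P^B=f(X')$, $f=\chi_{(0,\infty)}$, $X'=c_0I+\sum_jc_jQ_j'$, after invoking self-testing of the restricted strategy. Your treatment of the first two terms in the squared-norm expansion is sound, modulo details: $\bra{\psi}P'\otimes I\ket{\psi}=\tau(P)$ is just equality of correlations, and $\bra{\psi}I\otimes P^B\ket{\psi}=\tau(P)$ follows once one upgrades the vector-level Bob-side dilation to an operator identity $U_BQ_j'=(Q_j\otimes I_{\cK_B})U_B$ using full Schmidt rank, whence $U_Bg(X')=(g(X)\otimes I_{\cK_B})U_B$ for continuous $g$ (note one also needs $X'\in\glr{n}$, which follows because $U_BX'=(X\otimes I)U_B$ and $X$ is invertible). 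So far so good.

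The gap is in the cross term. You assert that $\bra{\psi}P'\otimes P^B\ket{\psi}=\tau(P)$ is "propagated by linearity and continuous functional calculus in $X'$." This does not work: linearity of the correlation only gives you $\bra{\psi}P'\otimes X'\ket{\psi}=\tau(PX)$, i.e.\ \emph{degree one} in Bob's operator. The quantity $\bra{\psi}P'\otimes f(X')\ket{\psi}$ is nonlinear in $X'$ on Bob's side and involves the as-yet-undilated $P'$ on Alice's side, so neither the correlation data nor the Bob-side operator dilation alone determines it. If you write it out as $\langle(U_A\otimes U_B)(P'\otimes I)\ket{\psi},\,(I\otimes f(X))\ket{\phi_n}\otimes\ket{\aux}\rangle$, the unknown vector $(U_A\otimes U_B)(P'\otimes I)\ket{\psi}$ is exactly what you are trying to identify — the argument is circular as stated. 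Without this, the squared-norm computation is incomplete: Cauchy--Schwarz and positivity only give $\bra{\psi}P'\otimes P^B\ket{\psi}\le\tau(P)$, i.e.\ $\|\cdot\|^2\ge0$, which is vacuous.

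The correct device, which exploits the specific $\sgn$ structure, is to work with a weighted norm that \emph{linearizes} the cross term. Set $\ket{\xi}:=(I\otimes|X'|^{1/2})\ket{\psi}$ and compute
\begin{align*}
\bigl\|\bigl((2P'-I)\otimes I - I\otimes\sgn(X')\bigr)\ket{\xi}\bigr\|^2
&=2\bra{\psi}I\otimes|X'|\ket{\psi}-2\bra{\psi}(2P'-I)\otimes|X'|^{1/2}\sgn(X')|X'|^{1/2}\ket{\psi}\\
&=2\bra{\psi}I\otimes|X'|\ket{\psi}-2\bra{\psi}(2P'-I)\otimes X'\ket{\psi}\\
&=2\tau(|X|)-2\tau\bigl(\sgn(X)X\bigr)=0,
\end{align*}
where the first term uses the Bob-side operator dilation ($g=|\cdot|$) and the second is a genuine correlation because $|X'|^{1/2}\sgn(X')|X'|^{1/2}=X'$ is \emph{linear} in the $Q_j'$. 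Since $X'$ is invertible, $I\otimes|X'|^{1/2}$ is injective, and one recovers $(2P'-I)\otimes I\ket{\psi}=I\otimes\sgn(X')\ket{\psi}$, i.e.\ your $(\star)$. (An equivalent route is the trace-maximization characterization of $\sgn$: $\tau(VX)\le\tau(|X|)$ for hermitian $\|V\|\le1$ with equality forcing $V=\sgn(X)$.) This is the missing idea; the rest of your writeup, including the deduction of the full local-dilation identity $(U_A\otimes U_B)(P'\otimes Q_j')\ket{\psi}=(P\otimes Q_j)\ket{\phi_n}\otimes\ket{\aux}$ via the Bob-side operator dilations, goes through once $(\star)$ is in hand.
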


As mentioned at the beginning of the subsection, Proposition {\ref{p:chen}} will be used to obtain a self-tested strategy by extending $\cS_n$ from Subsection {\ref{ss:state}}. Recall that $\fP^{(n)}_3+\fP^{(n)}_4$ has pairwise distinct eigenvalues by Proposition {\ref{p:spectrum}}. This gives rise to a family of projections that satisfy the sufficiency condition in Proposition {\ref{p:chen}}.

\begin{prop}\label{p:distinct}
Let $n,r\in\N$ with $r\le n$.
The matrix
$$\fQ^{(n,r)}:=\frac12\bigg(I+\sgn\Big((2r-\tfrac12)I-
n\big(\fP^{(n)}_3+\fP^{(n)}_4\big)
\Big)\bigg)\in\mtxr{n}$$
is a projection of rank $r$, and satisfies
$$2\fQ^{(n,r)}-I\in\sgn\Big(
\glr{n}\cap\spa_\R\{I,\fP^{(n)}_3,\fP^{(n)}_4\}
\Big).$$
\end{prop}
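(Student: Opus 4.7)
The plan is to reduce everything to the explicit eigenvalue description in Proposition~\ref{p:spectrum}. The crux is that $n(\fP^{(n)}_3 + \fP^{(n)}_4)$ has integer eigenvalues that are either all even (when $n$ is odd) or all odd (when $n$ is even). Consequently, after shifting by the half-integer $2r - \tfrac12$, the symmetric matrix
\[
M := (2r - \tfrac12)I - n\bigl(\fP^{(n)}_3 + \fP^{(n)}_4\bigr) \;\in\; \spa_\R\{I, \fP^{(n)}_3, \fP^{(n)}_4\}
\]
has eigenvalues lying in $\tfrac12 + \Z$, and in particular all of them are nonzero. Hence $M \in \glr{n} \cap \spa_\R\{I, \fP^{(n)}_3, \fP^{(n)}_4\}$, so $\sgn(M)$ is well-defined and automatically lies in the set $\sgn\bigl(\glr{n} \cap \spa_\R\{I, \fP^{(n)}_3, \fP^{(n)}_4\}\bigr)$. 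Since $2\fQ^{(n,r)} - I = \sgn(M)$ by definition, the second assertion follows immediately.

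For the rank claim, note first that $\fQ^{(n,r)} = \tfrac12(I + H)$ with $H := \sgn(M)$ a real symmetric involution. Hence $\fQ^{(n,r)}$ is real symmetric with spectrum contained in $\{0,1\}$, so it is a projection; its rank equals the number of $+1$ eigenvalues of $H$, which is the number of positive eigenvalues of $M$. I would then split cases using Proposition~\ref{p:spectrum}. If $n$ is odd, the eigenvalues of $n(\fP^{(n)}_3 + \fP^{(n)}_4)$ are $2k$ for $k = 0, 1, \dots, n-1$, and $2r - \tfrac12 - 2k > 0$ iff $k < r - \tfrac14$, i.e.\ iff $k \in \{0, 1, \dots, r-1\}$, giving exactly $r$ positive eigenvalues. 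If $n$ is even, the eigenvalues are $2k+1$ for $k = 0, 1, \dots, n-1$, and $2r - \tfrac12 - (2k+1) > 0$ iff $k < r - \tfrac34$, which again yields $k \in \{0, 1, \dots, r-1\}$. In both cases (using $r \le n$) exactly $r$ eigenvalues are positive, so $\fQ^{(n,r)}$ has rank $r$.

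No step of this argument is substantive once Proposition~\ref{p:spectrum} is in hand; the only thing to be careful about is the parity bookkeeping that ensures $M$ is invertible and that the positive-eigenvalue count is precisely $r$. The shift $2r - \tfrac12$ is chosen exactly to place the cutoff strictly between the $(r-1)$-th and $r$-th eigenvalues of $n(\fP^{(n)}_3+\fP^{(n)}_4)$, which is what makes the case analysis go through uniformly in the parity of $n$.
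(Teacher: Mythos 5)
Your proof is correct and takes essentially the same approach as the paper: both read off the eigenvalues of $n(\fP^{(n)}_3 + \fP^{(n)}_4)$ from Proposition~\ref{p:spectrum}, observe that the half-integer shift $2r - \tfrac12$ falls strictly between consecutive eigenvalues so the shifted matrix is invertible with exactly $r$ positive eigenvalues, and conclude the rank from the multiplicity of $+1$ in $\sgn$. You merely spell out the parity bookkeeping that the paper leaves implicit.
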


\begin{proof}
The matrix $\fQ^{(n,r)}$ is a projection by definition of the map $\sgn$.
By Proposition \ref{p:spectrum}, the matrix $n(\fP^{(n)}_3+\fP^{(n)}_4)$ has eigenvalues $\{0,2,\dots,2n-2\}$ if $n$ is odd and $\{1,3,\dots,2n-1\}$ if $n$ is even. 
Therefore $(2r-\tfrac12)I-
n(\fP^{(n)}_3+\fP^{(n)}_4)$ has $r$ positive eigenvalues and $n-r$ negative eigenvalues. 
Consequently, the multiplicities of eigenvalues $1$ and $-1$ of $\sgn((2r-\tfrac12)I-
n(\fP^{(n)}_3+\fP^{(n)}_4))$ are $r$ and $n-r$, respectively.
Hence the rank of $\fQ^{(n,r)}$ is $r$.
\end{proof}

\begin{rem}\label{r:Qalt}
For $r\le n$ let $\ket{e_1},\dots,\ket{e_r}\in\R^n$ be unit eigenvectors of $\fP^{(n)}_3+\fP^{(n)}_4$ corresponding to the smallest $r$ eigenvalues in increasing order (note that $\ket{e_i}$ are uniquely determined up to a sign because $\fP^{(n)}_3+\fP^{(n)}_4$ has $n$ distinct eigenvalues). Then $$\fQ^{(n,r)}=\ket{e_1}\!\bra{e_1}+\cdots+ \ket{e_r}\!\bra{e_r}.$$
For concrete matrix representations of $\fQ^{(n,r)}$ when $1\le r<n\le 6$, see Appendix \ref{app}. While this is arguably a simpler and computationally more available definition of $\fQ^{(n,r)}$ than the original in Proposition \ref{p:distinct}, the presentation in terms of the $\sgn$ map is critical in establishing the self-test of Theorem \ref{t:Snr_self_test} below.
\end{rem}

\begin{rem}\label{r:traceQ}
Let us determine the normalized traces of $\fP^{(n)}_i\fQ^{(n,r)}$ for $r<\frac{n}{2}$. Clearly, $\tau\left(\fQ^{(n,r)}\right)=\frac{r}{n}$.
By Proposition \ref{p:ukr} there exists a unitary $U\in\mtxc{n}$ such that $U\fP^{(n)}_3U^*=\fP^{(n)}_4$ and $U\fP^{(n)}_4U^*=\fP^{(n)}_3$, and therefore $\tr\left(\fP^{(n)}_3\fQ^{(n,r)}\right)
=\tr\left(\fP^{(n)}_4\fQ^{(n,r)}\right)$.
Thus
$$
\tau\left(\fP^{(n)}_i\fQ^{(n,r)}\right)=
\frac12\tau\left(\fQ^{(n,r)}\big(\fP^{(n)}_3+\fP^{(n)}_4\big)\fQ^{(n,r)}\right)=
\frac{r}{2n^2}\left(r-\frac{1-(-1)^n}{2}\right)
$$
for $i=3,4$ by Proposition \ref{p:spectrum}, since $\tr(\fQ^{(n,r)}(\fP^{(n)}_3+\fP^{(n)}_4)\fQ^{(n,r)})$ is the sum of smallest $r$ eigenvalues of $\fP^{(n)}_3+\fP^{(n)}_4$ by Remark \ref{r:Qalt}.
Since $r<\frac{n}{2}$, Proposition \ref{p:34to12} and Remark \ref{r:Qalt} imply $\tr(\fQ^{(n,r)}\fP^{(n)}_1\fQ^{(n,r)})=\tr(\fQ^{(n,r)}\fP^{(n)}_2\fQ^{(n,r)})$. By the defining relation of $\fP^{(n)}_i$ we then obtain
$$\tau\left(\fP^{(n)}_i\fQ^{(n,r)}\right)=
\frac12\left(
\left(2-\frac1n\right)\tau\left(\fQ^{(n,r)}\right)
-\tau\left(\fP^{(n)}_3\fQ^{(n,r)}\right)
-\tau\left(\fP^{(n)}_4\fQ^{(n,r)}\right)
\right)
$$
for $i=1,2$.
\end{rem}

\begin{defn}\label{d:5in}
Given $n,r\in\N$ with $r< n$, let $\fP^{(n)}_i$ be as in Proposition \ref{p:ukr}, and let $\fQ^{(n,r)}$ be as in Proposition \ref{p:distinct}. Let $\cS_{n,r}$ be the $(5,4)$-input $(2,2)$-output bipartite strategy
$$\left(\ket{\phi_n};
\left(\fP^{(n)}_i,I-\fP^{(n)}_i\right)_{i=1}^4,(\fQ^{(n,r)},I-\fQ^{(n,r)});
\left(\fP^{(n)}_i,I-\fP^{(n)}_i\right)_{i=1}^4
\right).
$$
Since $\cS_{n,r}$ is an extension of $\cS_n$, its correlation is determined by that of $\cS_n$ and
\begin{align*}
p(1|5)&=\bra{\phi_n}\fQ^{(n,r)}\otimes I\ket{\phi_n}=\tau\left(\fQ^{(n,r)}\right),\\
p(1,1|i,5)&=\bra{\phi_n}\fQ^{(n,r)}\otimes \fP^{(n)}_j\ket{\phi_n} =\tau\left(\fP^{(n)}_i\fQ^{(n,r)}\right)
\end{align*}
for $i=1,\dots,4$, which are computed in Remark \ref{r:traceQ}. 
\end{defn}

Let $n,r\in\N$ with $r<n$. If $r=\frac{n}{2}$, then a binary projective measurement of dimension $n$ and rank $r$ is up to a unitary basis change contained in the self-tested strategy $\cS_n$. Otherwise, a binary projective measurement of dimension $n$ and rank $r$ is contained, up to a unitary basis change and a reordering of outputs, in $\cS_{n,r}$ or $\cS_{n,n-r}$. For this reason, let us explicitly determine the correlation of $\cS_{n,r}$ only for $r<\frac{n}{2}$.
Since $\cS_{n,r}$ is an extension of $\cS_n$ (whose correlation is given in Subsection {\ref{ss:state}}) and Remark {\ref{r:traceQ}} computes the additional inner products (for $r<\frac{n}{2}$),
the correlation of $\cS_{n,r}$ is determined by the vector
$$\big(p(1|j)\big)_{j=1}^5 = \begin{pmatrix}
	\frac{\lfloor\frac{n}{2}\rfloor-(-1)^n}{n} &
	\frac{\lfloor\frac{n}{2}\rfloor}{n} &
	\frac{\lfloor\frac{n}{2}\rfloor}{n} &
	\frac{\lfloor\frac{n}{2}\rfloor}{n} & \frac{r}{n}
\end{pmatrix}$$
and the $5\times 4$ matrix $\big(p(1,1|i,j)\big)_{i,j}$
$$\left(\begin{array}{cccc}
	\frac{\lfloor\frac{n}{2}\rfloor-(-1)^n}{n} & \frac{(n-1)(\lfloor\frac{n}{2}\rfloor-(-1)^n)}{3n^2} & \frac{(n-1)(\lfloor\frac{n}{2}\rfloor-(-1)^n)}{3n^2} & \frac{(n-1)(\lfloor\frac{n}{2}\rfloor-(-1)^n)}{3n^2} \\[5pt]
	\cdot &
	\frac{\lfloor\frac{n}{2}\rfloor}{n} &
	\frac{(n-1)(2n-1+3(-1)^n)}{12n^2} &
	\frac{(n-1)(2n-1+3(-1)^n)}{12n^2} \\[5pt]
	\cdot &
	\cdot &
	\frac{\lfloor\frac{n}{2}\rfloor}{n} &
	\frac{(n-1)(2n-1+3(-1)^n)}{12n^2}  \\[5pt]
	\cdot & \cdot & \cdot &
	\frac{\lfloor\frac{n}{2}\rfloor}{n}  \\[5pt]
\frac{r(4n-2r-1-(-1)^n)}{4n^2} & \frac{r(4n-2r-1-(-1)^n)}{4n^2} & \frac{r(2r-1+(-1)^n)}{4n^2} & \frac{r(2r-1+(-1)^n)}{4n^2} \\
\end{array}\right)
$$
where the missing entries are determined by $p(1,1|i,j)=p(1,1|j,i)$ for $i,j\le 4$.

\begin{thm}\label{t:Snr_self_test}
The strategy $\cS_{n,r}$ is self-tested by its correlation for all $n,r\in\N$ with $r<n$.
\end{thm}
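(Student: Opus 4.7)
The plan is to apply the post-hoc self-testing criterion of Proposition \ref{p:chen} with $\cS_n$ as the base self-tested strategy and the additional projection $P := \fQ^{(n,r)}$; essentially all the hard work has already been carried out in the preceding sections, so the proof reduces to checking that the hypotheses fit.

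First, I would recall that Theorem \ref{t:Sn_self_test} certifies that $\cS_n$ is self-tested by its correlation. Its measurement projections $\fP^{(n)}_i$ are real matrices (Proposition \ref{p:ukr}) and its shared state is $\ket{\phi_n}$, so $\cS_n$ fits the template required by Proposition \ref{p:chen} with $N_A=N_B=4$, Alice's projections $P_i=\fP^{(n)}_i$, and Bob's projections $Q_j=\fP^{(n)}_j$.

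Next I would invoke Proposition \ref{p:distinct}: $\fQ^{(n,r)}\in\mtxr{n}$ is a projection of rank $r$ satisfying
$$
2\fQ^{(n,r)}-I \in \sgn\Big(\glr{n}\cap\spa_\R\{I,\fP^{(n)}_3,\fP^{(n)}_4\}\Big)
\subseteq \sgn\Big(\glr{n}\cap\spa_\R\{I,\fP^{(n)}_1,\fP^{(n)}_2,\fP^{(n)}_3,\fP^{(n)}_4\}\Big),
$$
which is exactly the hypothesis of Proposition \ref{p:chen} on the extending projection $P=\fQ^{(n,r)}$ expressed in terms of Bob's measurements. Applying Proposition \ref{p:chen} then yields that the $(5,4)$-input $(2,2)$-output strategy obtained by appending $(\fQ^{(n,r)},I-\fQ^{(n,r)})$ to Alice's measurements of $\cS_n$ is self-tested by its correlation; by Definition \ref{d:5in} this strategy is precisely $\cS_{n,r}$.

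The actual substance of the argument has already been accomplished: proving that $\fP^{(n)}_3+\fP^{(n)}_4$ has simple spectrum (Proposition \ref{p:spectrum}), constructing $\fQ^{(n,r)}$ via the functional-calculus sign map so that Proposition \ref{p:distinct} holds, and establishing the base self-test (Theorem \ref{t:Sn_self_test}). Given those ingredients, no further technical obstacle stands in the way — the theorem follows from a single, direct invocation of the post-hoc self-testing criterion of \cite{CMV}.
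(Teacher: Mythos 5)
Your proof is correct and matches the paper's argument essentially verbatim: invoke Theorem~\ref{t:Sn_self_test} for the base self-test, apply Proposition~\ref{p:distinct} to verify the $\sgn$-membership hypothesis, and conclude via Proposition~\ref{p:chen}. The only (harmless) elaboration you add beyond the paper's two-line proof is the explicit inclusion of $\spa_\R\{I,\fP^{(n)}_3,\fP^{(n)}_4\}$ into the full span of $\{I,\fP^{(n)}_1,\dots,\fP^{(n)}_4\}$.
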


\begin{proof}
By Theorem \ref{t:Sn_self_test}, the strategy $\cS_n$ is self-tested by its correlation. Note that the projection $\fQ^{(n,r)}$ lies in the image of the span of $\{\fP_i^{(n)}\}_{i=1}^4$ under the map $\sgn$.
Therefore $\cS_{n,r}$ is self-tested by its correlation by Proposition \ref{p:chen}.
\end{proof}

\begin{cor}\label{c:st5}
Every local binary projective measurement appears in a 5-input 2-output strategy that is self-tested by its correlation.
\end{cor}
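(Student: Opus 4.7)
The plan is to derive the corollary directly from Theorem \ref{t:Snr_self_test} together with Corollary \ref{c:st4}, after small reductions for degenerate rank parameters and a padding step to symmetrize the number of inputs per party.

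A binary projective measurement of local dimension $n$ is, up to unitary basis change and swapping the two outputs, determined by the rank $r \in \{0,1,\dots,\lfloor n/2\rfloor\}$ of the distinguished projection. For $1 \le r < n/2$, I would invoke Theorem \ref{t:Snr_self_test}: the strategy $\cS_{n,r}$ is self-tested by its correlation, and by Proposition \ref{p:distinct} its fifth Alice measurement $(\fQ^{(n,r)}, I - \fQ^{(n,r)})$ realizes a projection of rank exactly $r$ in $\mtxr{n}$. For $r = n/2$ (which requires even $n$), Corollary \ref{c:st4}(b) already supplies a self-tested 4-input 2-output strategy containing such a measurement. The boundary ranks $r \in \{0,n\}$ give the trivial measurements $(0,I)$ and $(I,0)$, which can be placed into any $n$-dimensional self-tested strategy essentially for free.

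Since Definition \ref{d:5in} produces only a $(5,4)$-input strategy (and Corollary \ref{c:st4} produces a $(4,4)$-input one), to obtain a genuinely 5-input strategy in the symmetric sense of Section \ref{s:prelim} I would pad the deficient side(s) by appending a duplicate of one of the existing measurements. The auxiliary claim needed is: appending to a self-tested strategy (with full-Schmidt-rank state) a copy of one of its existing measurements on one of the parties yields another strategy that is self-tested by its correlation. The justification I have in mind runs as follows. Any competing strategy with the padded correlation also matches the unpadded correlation, so by the original self-test it is already a local dilation of the unpadded strategy; the coincidence of the new correlation column with an existing one then forces the duplicated measurement of the competing strategy, transported through the dilation isometry, to equal the measurement it duplicates, using that the maximally entangled state has full Schmidt rank and hence pins Bob's (or Alice's) operators down uniquely within their block.

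The main obstacle I anticipate is a clean formulation and verification of this padding lemma. The argument is conceptually routine but requires some care with the local dilation data from Section \ref{s:prelim} and with the abuse-of-notation identification in \eqref{e:locdil}, as well as with the transitivity of local dilations already invoked in the proof of Theorem \ref{t:Sn_self_test}. Once this lemma is in place, the corollary follows immediately by combining it with Theorem \ref{t:Snr_self_test} and Corollary \ref{c:st4} across the case analysis above.
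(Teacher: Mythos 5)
Your core step coincides with the paper's: reduce to the projections $\fQ^{(n,r)}$ (up to unitary basis change and output relabelling) and invoke Theorem \ref{t:Snr_self_test} for the strategy $\cS_{n,r}$. Two comments.

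First, your case analysis is more elaborate than necessary. Theorem \ref{t:Snr_self_test} holds for all $1\le r\le n-1$, so there is no need to treat $r=n/2$ separately via Corollary \ref{c:st4}(b); the restriction $r<n/2$ in the text appears only in Remark \ref{r:traceQ}, for the closed-form correlation formulas, not in the self-testing statement. The degenerate ranks $r\in\{0,n\}$ are indeed trivial (and arguably outside the intended scope), so that part is fine.

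Second, the padding issue. You are right that $\cS_{n,r}$ is formally a $(5,4)$-input strategy, whereas the paper's own conventions would make ``5-input'' mean $(5,5)$; the paper simply uses the phrase loosely, meaning at most five inputs per party. Your instinct to symmetrize is reasonable, but your justification of the padding lemma has a genuine gap. From the dilation of the restricted strategy you know $(U_A\otimes U_B)(\cM_{i,a}\otimes\cN_{1,b})\ket{\psi}=(\wt\cM_{i,a}\otimes\wt\cN_{1,b})\ket{\wt\psi}\otimes\ket{\aux}$, and you know the correlation entries with the duplicated input $\cN_{5}$ agree with those for $\cN_{1}$. But this alone does \emph{not} force $(\cM_{i,a}\otimes(\cN_{5,b}-\cN_{1,b}))\ket{\psi}=0$: the difference $\cN_{5,b}-\cN_{1,b}$ is not semidefinite, so equality of inner products does not yield vanishing of the vector, and ``full Schmidt rank pins down Bob's operators'' does not apply here because $\ket{\psi}$ (in the competing strategy) need not be maximally entangled and you have no handle on $I\otimes\cN_{5,b}\ket{\psi}$ independently of the correlations. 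The clean way to pad, staying inside the paper's toolkit, is to apply Proposition \ref{p:chen} (or its Bob-side analogue, which holds by symmetry of \cite[Proposition 3.7]{CMV}) with $P$ one of the existing projections, e.g.\ $\fP^{(n)}_1$: since $2\fP^{(n)}_1-I$ is already a hermitian unitary, $2\fP^{(n)}_1-I=\sgn(2\fP^{(n)}_1-I)\in\sgn\bigl(\glr{n}\cap\spa_\R\{I,\fP^{(n)}_1\}\bigr)$, so the extended strategy is self-tested. If you want to insist on a literally symmetric $(5,5)$-input strategy, say so and route the padding through Proposition \ref{p:chen}; otherwise, adopt the paper's looser reading of ``5-input'' and the corollary follows directly from Theorem \ref{t:Snr_self_test} with no padding at all.
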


\begin{proof}
Every binary PVM is, up to unitary basis change, determined by its dimension and ranks of its projections. Therefore it suffices to consider measurements $(\fQ^{(n,r)},I-\fQ^{(n,r)})$, and these appear in the 5-input 2-output strategies $\cS_{n,r}$, self-tested by Theorem \ref{t:Snr_self_test}.
\end{proof}

Finally, we generalize Theorem \ref{t:Snr_self_test} to arbitrary $K$-PVMs. Given $r_1,\dots,r_K,n\in\N$ with $n=r_1+\cdots+r_K$, Remark \ref{r:Qalt} shows that 
$$\fQ^{(r_1,\dots,r_K)}_a:=\fQ^{(n,r_1+\cdots+r_a)}-\fQ^{(n,r_1+\cdots+r_{a-1})}$$
is a projection of rank $r_a$ for every $a=1,\dots,K$, and
$$\left(\fQ^{(r_1,\dots,r_K)}_a\right)_{a=1}^K$$
is a $K$-PVM. To it we assign a certain bipartite strategy with a mixed number of inputs and outputs.

\begin{defn}\label{d:pvm}
Let $r_1,\dots,r_K,n\in\N$ with $n=r_1+\cdots+r_K$. We define a bipartite strategy $\cS_{r_1,\dots,r_K}$ that has 4 inputs with 2 outputs and 1 input with $K$ outputs for the first party, and 4 inputs with 2 outputs for the second party:
$$\cS_{r_1,\dots,r_K}=\left(\ket{\phi_n};
\left(\fP^{(n)}_i,I-\fP^{(n)}_i\right)_{i=1}^4,
\left(\fQ^{(r_1,\dots,r_K)}_a\right)_{a=1}^K; 
\left(\fP^{(n)}_i,I-\fP^{(n)}_i\right)_{i=1}^4
\right).$$
\end{defn}

As for the correlation of $\cS_{n,r}$ from Definition \ref{d:5in}, one can derive similar (yet more involved) formulae for the correlation of $\cS_{r_1,\dots,r_K}$ using Remark \ref{r:Qalt}, and Propositions \ref{p:spectrum} and \ref{p:34to12}.

\begin{cor}\label{c:pvm}
Let $r_1,\dots,r_K,n\in\N$ with $n=r_1+\cdots+r_K$ be arbitrary. Then the strategy $\cS_{r_1,\dots,r_K}$ is self-tested by its correlation.

In particular, every single local $K$-PVM appears in a self-tested strategy that has 8 inputs with 2 outputs and 1 input with $K$ outputs.
\end{cor}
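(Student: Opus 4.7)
The strategy $\cS_{r_1,\dots,r_K}$ is obtained from $\cS_n$ by appending a single $K$-ary measurement on the first party, and my plan is to reduce self-testing of this $K$-PVM to iterated application of the binary post-hoc criterion (Proposition \ref{p:chen}), exploiting the fact that a $K$-PVM and its $K-1$ cumulative partial sums encode equivalent joint statistics.

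First I would construct an auxiliary self-tested binary extension of $\cS_n$. For $k=1,\dots,K-1$ set $s_k:=r_1+\cdots+r_k$. The defining formula $\fQ^{(r_1,\dots,r_K)}_a=\fQ^{(n,s_a)}-\fQ^{(n,s_{a-1})}$ shows that the cumulative partial sums of the target $K$-PVM are exactly the projections $\fQ^{(n,s_k)}$, and each of them satisfies the sufficient condition of Proposition \ref{p:chen} with respect to the second party's measurement span by Proposition \ref{p:distinct}. Starting from $\cS_n$ (self-tested by Theorem \ref{t:Sn_self_test}) and applying Proposition \ref{p:chen} successively to append the binary PVMs $(\fQ^{(n,s_k)},I-\fQ^{(n,s_k)})$ on the first party, noting that the span of the second party's measurements is untouched throughout the iteration, yields a self-tested $(K+3,4)$-input $(2,2)$-output strategy $\widehat{\cS}$.

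Next, given any competing strategy $\wt\cS=(\ket{\psi};(\wt P_i,I-\wt P_i)_{i=1}^4,(\wt Q_a)_{a=1}^K;(\wt R_j,I-\wt R_j)_{j=1}^4)$ with the correlation of $\cS_{r_1,\dots,r_K}$, I would replace $(\wt Q_a)_a$ by the binary PVMs $(\wt Q^{\le k},I-\wt Q^{\le k})$ with $\wt Q^{\le k}:=\wt Q_1+\cdots+\wt Q_k$, producing an auxiliary strategy $\wt{\widehat{\cS}}$. The extra correlation entries of $\widehat{\cS}$ (respectively $\wt{\widehat{\cS}}$) are precisely the partial sums in $a$ of those of $\cS_{r_1,\dots,r_K}$ (respectively $\wt\cS$), so $\wt{\widehat{\cS}}$ and $\widehat{\cS}$ share a correlation. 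Self-testing of $\widehat{\cS}$ will then supply isometries $U_A,U_B$ and an ancilla verifying \eqref{e:locdil} for all its measurements; in particular it maps $\wt Q^{\le k}\otimes\wt R_j$ applied to $\ket{\psi}$ to the corresponding vector built from $\fQ^{(n,s_k)}\otimes\fP^{(n)}_j$ and $\ket{\phi_n}$. Subtracting the identities at consecutive values of $k$ and using $\fQ^{(n,s_k)}-\fQ^{(n,s_{k-1})}=\fQ^{(r_1,\dots,r_K)}_k$ delivers \eqref{e:locdil} for the $K$-ary measurement of $\cS_{r_1,\dots,r_K}$, witnessing that $\cS_{r_1,\dots,r_K}$ is a local dilation of $\wt\cS$. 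The \emph{in particular} assertion follows because every local $K$-PVM on $\C^n$ is, up to unitary basis change and permutation of outputs, of the form $(\fQ^{(r_1,\dots,r_K)}_a)_a$ with $r_1+\cdots+r_K=n$.

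The one point requiring care is that the passage between $K$-ary and binary formats must remain faithful at the level of the local-dilation identity \eqref{e:locdil}, not merely at the level of marginal correlations; this reduces to the elementary observation that both sides of \eqref{e:locdil} are linear in the first-party operator, so partial sums and differences pass through the equality unchanged. Consequently no genuine obstacle arises beyond bookkeeping, and the main conceptual work is already absorbed into the iterative application of Proposition \ref{p:chen} in the first stage.
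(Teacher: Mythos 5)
Your proposal is correct and follows essentially the same route as the paper: translate the $K$-output measurement into a chain of binary measurements via cumulative sums $\fQ^{(n,s_k)}$, iterate Proposition~\ref{p:chen} to self-test the enlarged binary strategy, match correlations because the new entries are just partial sums in $a$, and then recover the local-dilation identity for $\cS_{r_1,\dots,r_K}$ by linearity (differencing consecutive $k$). The paper compresses the last two steps into ``the same local isometries and the ancillary state show\ldots,'' whereas you spell out the correlation-matching and the linearity argument explicitly, but the structure and the key ideas are identical.
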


\begin{proof}
Let
$$\cS=\left(\ket{\psi};
\left(P_i,I-P_i\right)_{i=1}^4,
\left(R_a\right)_{a=1}^K; 
\left(Q_i,I-Q_i\right)_{i=1}^4
\right)$$
be a bipartite strategy with the same correlation as $\cS_{r_1,\dots,r_K}$.
Define bipartite strategies that have $3+K$ inputs with 2 outputs for the first party, and 4 inputs with 2 outputs for the second party:
\begin{align*}
\wt\cS&=\left(\ket{\phi_n};
\left(\fP^{(n)}_i,I-\fP^{(n)}_i\right)_{i=1}^4,
\left(
\fQ^{(n,r_1+\cdots+r_a)}_a,I-\fQ^{(n,r_1+\cdots+r_a)}_a
\right)_{a=1}^{K-1};\right. \\
&\qquad\qquad\qquad\qquad\qquad\qquad\quad
\left.\left(\fP^{(n)}_i,I-\fP^{(n)}_i\right)_{i=1}^4
\right),\\
\cS'&=\left(\ket{\psi};
\left(P_i,I-P_i\right)_{i=1}^4,
\left(R_1+\cdots+R_a,I-(R_1+\cdots+R_a)\right)_{a=1}^{K-1}; \right.\\
&\qquad\qquad\qquad\qquad\quad\ \quad
\left.\left(Q_i,I-Q_i\right)_{i=1}^4
\right).
\end{align*}
Since the projections $\fQ^{(n,r_1+\cdots+r_a)}$ lie in the image of the span of $\{\fP_i^{(n)}\}_{i=1}^4$ under the map $\sgn$ by Proposition \ref{p:distinct}, and the strategy $\cS_n$ is self-tested by Theorem \ref{t:Sn_self_test}, the strategy $\wt\cS$ is self-tested by a repeated application of Proposition \ref{p:chen}. Therefore $\wt\cS$ is a local dilation of $\cS'$. The same local isometries and the ancillary state show that $\cS_{r_1,\dots,r_K}$ is a local dilation of $\cS$.
\end{proof}

\section{Obstructions to constant-sized self-tests}\label{s:obs}

In a sense, maximally entangled states of all dimensions and single binary projective measurements of all dimensions and ranks can be self-tested with a constant number of inputs and outputs because they form discrete families of objects (i.e., they are parameterized by one and two natural parameters, respectively). On the other hand, there are no constant-sized self-tests for all entangled states, nor for all pairs of binary projective measurements, as implied by the results of this section (for self-tests with varying numbers of inputs, see \cite{col19} and \cite{CMV}). 
The local dimension of subsystems in a quantum strategy is not directly responsible for the absence of constant-sized self-tests; rather, dimensions of parameter spaces describing states and pairs of binary projective measurements are the obstructions to existence of uniform self-tests. The proofs of statements in this section rely on notions from real algebraic geometry \cite{BCR}.

By the singular value decomposition, every bipartite $\ket{\psi}\in\C^n\otimes\C^n$ is, up to a left-right unitary basis change, equal to
$$\sum_{i=1}^nc_i \ket{i}\!\ket{i}$$
for $c_i\ge0$ and $\sum_{i=1}^nc_i^2=1$. The numbers $c_i$ are the \emph{Schmidt coefficients} of $\ket{\psi}$. For example, all the Schmidt coefficients of $\ket{\phi_n}$ are $\frac{1}{\sqrt{n}}$. Note that $\ket{\psi}$ has full Schmidt rank if and only if $c_i>0$ for all $i$. 

\begin{prop}\label{p:nogo_states}
Let $L,K,N\in\N$ satisfy
$$L>(N(K-1)+1)^2.$$
Then for all $d_1,\dots,d_L\in\N$ there exists a bipartite state with $L$ distinct Schmidt coefficients of multiplicities $d_1,\dots,d_L$ that cannot be self-tested by $N$-inputs and $K$-outputs.
\end{prop}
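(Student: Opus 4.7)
The plan is a dimension count in real algebraic geometry \cite{BCR}, comparing the moduli of states with the prescribed Schmidt profile against the ambient space of accessible correlations. Up to local unitary equivalence, a bipartite state with ordered Schmidt coefficients $c_1>\cdots>c_L>0$ of multiplicities $d_1,\ldots,d_L$ is determined by the tuple $(c_j)$ subject only to the normalization $\sum_j d_j c_j^2=1$, so the family $\mathcal{F}$ of such states is a semi-algebraic set of dimension exactly $L-1$. On the other hand, the bipartite non-signalling polytope of $N$-input $K$-output correlations has dimension $D=(N(K-1)+1)^2-1$ (parameterize $2N(K-1)$ marginals together with $N^2(K-1)^2$ joints, the remaining entries being determined); the quantum correlation set $\mathcal{C}_{N,K}$ is a semi-algebraic subset of the non-signalling polytope, so $\dim\mathcal{C}_{N,K}\le D$.

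The crux is the bound $\dim\mathcal{F}_{\mathrm{test}}\le D$ on the subset $\mathcal{F}_{\mathrm{test}}\subseteq\mathcal{F}$ of states appearing in some $N$-input $K$-output strategy self-tested by its correlation. After reducing to full-Schmidt-rank strategies on $\C^n\otimes\C^n$ with $n=\sum_j d_j$ (cf.\ \cite[Lemma 4.9 and Corollary 3.6]{mancinska}), the set $\mathcal{M}_{\mathrm{test}}$ of self-tested strategies with state in $\mathcal{F}$, taken modulo the diagonal action of local unitaries, is semi-algebraic by Tarski-Seidenberg quantifier elimination applied to the polynomial relations defining strategies, correlations, and local dilations. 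The correlation map $\Phi\colon\mathcal{M}_{\mathrm{test}}\to\mathcal{C}_{N,K}$ is injective: if $\wt\cS_1$ and $\wt\cS_2$ are self-tested by the same correlation, each is a local dilation of the other, forcing a mutual local isomorphism. Projecting $\mathcal{M}_{\mathrm{test}}$ onto its state coordinate preserves or decreases semi-algebraic dimension, yielding $\dim\mathcal{F}_{\mathrm{test}}\le\dim\mathcal{M}_{\mathrm{test}}\le\dim\mathcal{C}_{N,K}\le D$.

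The hypothesis $L>(N(K-1)+1)^2$ now reads $L-1>D$, so $\dim\mathcal{F}>\dim\mathcal{F}_{\mathrm{test}}$ and therefore $\mathcal{F}\setminus\mathcal{F}_{\mathrm{test}}\neq\emptyset$; any representative is a state of Schmidt profile $(d_1,\ldots,d_L)$ that admits no self-test with $N$ inputs and $K$ outputs. The main obstacle is certifying genuine injectivity of $\Phi$: when two self-tested strategies are mutual local dilations, the isometry $U_A\otimes U_B$ sends one state to $\ket{\wt\psi}\otimes\ket{\aux}$, whose Schmidt multiset is the termwise product of those of $\ket{\wt\psi}$ and $\ket{\aux}$. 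Restricting to the fixed profile $(d_1,\ldots,d_L)$ and comparing multisets forces $\ket{\aux}$ to be a product state, which recovers local unitary equivalence of the two states and closes the dimension argument.
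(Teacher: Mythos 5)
Your overall plan---compare the semialgebraic dimension $L-1$ of the state family against the ambient correlation dimension $D=(N(K-1)+1)^2-1$, and squeeze the set of self-testable states in between via an injective correlation map---is exactly the paper's strategy. The Schmidt-coefficient multiset argument you sketch (ancilla forced to be a product state) is also the right substance behind the one step the paper asserts without proof (that a self-tested $\cS\in\bA$ yields a singleton correlation-fiber in the quotient). But there is a genuine gap in the middle of your argument: you assert that $\mathcal{M}_{\mathrm{test}}$ is semialgebraic ``by Tarski-Seidenberg quantifier elimination applied to the polynomial relations defining strategies, correlations, and local dilations.'' The predicate ``$\cS$ is self-tested by its correlation'' quantifies over \emph{all} strategies $\cS'$ with the same correlation, acting on Hilbert spaces $\cH_A',\cH_B'$ of \emph{unbounded} dimension, together with auxiliary spaces $\cK_A,\cK_B$ and isometries of unbounded size. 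Tarski--Seidenberg only eliminates quantifiers over finitely many real variables in a fixed ambient $\R^m$; it gives nothing here unless you first bound all these dimensions, and you offer no such bound.

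The paper circumvents this precisely by \emph{not} trying to parameterize the self-tested strategies directly. It fixes the finite-dimensional semialgebraic set $\bA$ of strategies with state in the prescribed Schmidt form, forms the quotient $\bB=\bA/G$ (semialgebraic by \cite[Prop.~2.2.4]{BCR}), and then defines
$$
\bC=\bigl\{[\cS]\in\bB : (g')^{-1}(\{g'([\cS])\})=\{[\cS]\}\bigr\},
$$
the set of classes on which the (finite-dimensional) correlation map $g'$ has a singleton fiber. This $\bC$ is semialgebraic because it is defined by a first-order condition internal to $\bB$; and $g'|_\bC$ is injective \emph{by construction}, so $\dim\bC\le D$ comes for free from \cite[Thm.~2.8.8]{BCR}. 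The only thing left is the one-way inclusion ``$\cS$ self-tested $\Rightarrow[\cS]\in\bC$,'' which follows from your Schmidt-multiset observation restricted to $\cS'\in\bA$ only; no quantification over unbounded dimensions is needed. To repair your write-up, replace $\mathcal{M}_{\mathrm{test}}$ with this intrinsically-defined $\bC$ and prove the inclusion rather than attempting to characterize the self-tested set semialgebraically.
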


\begin{proof}
Let $\bA$ denote the set of all $N$-input $K$-output bipartite quantum strategies whose states are of the form
\begin{equation}\label{e:schmidt_coef}
\ket{\psi}=\sum_{\ell=1}^L\lambda_\ell \sum_{i=d_{\ell-1}+1}^{d_\ell}\ket{i}\!\ket{i},
\qquad \lambda_1<\cdots<\lambda_L
\end{equation}
where $d_0:=0$. In particular, the states in strategies from $\bA$ have full Schmidt rank and $L$ distinct Schmidt coefficients of multiplicities $d_1,\dots,d_L$. Consider the action of $G:=\uc{d_1}\times\cdots\times \uc{d_L}$ on $\bA$, given by
$$U\cdot \big(\ket{\psi};(\cM_i)_i;(\cN_j)_j\big)=
\Big(U\otimes U\ket{\psi};(U\cM_i U^*)_i;(U\cN_j U^*)_j\Big)$$
for $U=\oplus_{\ell=1}^L U_\ell\in G$. Note that $G$ encodes precisely all actions of local unitaries that preserve the form \eqref{e:schmidt_coef} of states in strategies from $\cS$. Let $\bB$ be the quotient of $\bA$ with respect to the action of $G$, and let $\pi:\bA\to \bB$ be the canonical projection.
Given $\cS\in \bA$ let $f(\cS)\in \R^{d_1+\cdots+d_L}\otimes\R^{d_1+\cdots+d_L}$ be its state (i.e., $f$ is the projection onto the first component of the strategy).
To $\cS=(\ket{\psi};(\cM_i)_i;(\cN_j)_j)$ we also assign a tuple 
$g(\cS)\in \R^{(N(K-1)+1)^2-1}$ consisting of
\begin{align*}
&\bra{\psi}\cM_{i,a}\otimes \cN_{j,b}\ket{\psi},\qquad i,j=1,\dots,N,\ a,b=1,\dots,K-1,\\
&\bra{\psi}\cM_{i,a}\otimes I\ket{\psi},\qquad i=1,\dots,N,\ a=1,\dots,K-1,\\
&\bra{\psi}I\otimes \cN_{j,b}\ket{\psi},\qquad j=1,\dots,N,\ b=1,\dots,K-1.
\end{align*}
Note that $g(\cS)$ determines the correlation of $\cS$. The set $\bA$ is semialgebraic and the maps $f,g$ are semialgebraic \cite[Section 2]{BCR}.
Furthermore, $\bB$ is semialgebraic by \cite[Proposition 2.2.4]{BCR} since $G$ is a semialgebraic group. The maps $f,g$ factor through $\pi$, in the sense that there are semialgebraic maps $f',g'$ on $\bB$ satisfying $f'\circ \pi=f$ and $g'\circ \pi=g$.
Let $\bC\subseteq \bB$ be the set of equivalence classes $[\cS]$ such that $g'^{-1}(\{g'([\cS])\})=\{[\cS]\}$.
Then $\bC$ is also semialgebraic by \cite[Proposition 2.2.4]{BCR}.
Note that if $\cS\in \bA$ is self-tested by its correlation then $\pi(\cS)\in \bC$.
Observe that $\dim f'(\bB)=L-1$, and $\dim \bC=\dim g'(\bC)\le (N(K-1)+1)^2-1$ by \cite[Theorem 2.8.8]{BCR} since $g'|_\bC$ is injective. Surjectivity of $f'|_\bC$ would imply $\dim \bC\ge L-1$, contradicting $L-1>(N(K-1)+1)^2-1$. Therefore $f'|_\bC$ is not surjective. In particular, there exists a state $\ket{\psi}$ of the form \eqref{e:schmidt_coef} such that $\pi(\cS)\notin \bC$ for every $\cS\in f^{-1}(\{\ket{\psi}\})$. In particular, no $N$-input $K$-output strategy containing $\ket{\psi}$ is self-tested by its correlation.
\end{proof}

By the renowned theorem of Halmos \cite{halmos69}, a pair of projections $P_1,P_2\in \mtxc{n}$ is, up ot a unitary basis change, equal to
\begin{equation}\label{e:halmos}
\begin{split}
P_1&=
\ve_1 \oplus\cdots\oplus \ve_o\oplus\begin{pmatrix}
1& 0\\0&0
\end{pmatrix}\oplus\cdots\oplus \begin{pmatrix}
1& 0\\0&0
\end{pmatrix},\\
P_2&=
\ve'_1 \oplus\cdots\oplus \ve'_o\oplus\begin{pmatrix}
	\tfrac{1+\cos\alpha_1}{2}& \tfrac{\sin\alpha_1}{2}\\
	\tfrac{\sin\alpha_1}{2}&\tfrac{1-\cos\alpha_1}{2}
\end{pmatrix}\oplus\cdots\oplus
\begin{pmatrix}
	\tfrac{1+\cos\alpha_L}{2}& \tfrac{\sin\alpha_L}{2}\\
\tfrac{\sin\alpha_L}{2}&\tfrac{1-\cos\alpha_L}{2}
\end{pmatrix},
\end{split}
\end{equation}
where $\ve_i,\ve'_i\in\{0,1\}$ and $\alpha_\ell\in(0,\frac{\pi}{2})$.
The number of distinct $2\times 2$ blocks in \eqref{e:halmos} equals the number of distinct positive eigenvalues of $i(P_1P_2-P_2P_1)$.

\begin{prop}\label{p:nogo_pairs}
Let $L,N\in\N$ satisfy $L+1>(N+1)^2$.
Then for all $d_0,d_1,\dots,d_L\in\N$ there exists a pair of binary projective measurements $(P_1,I-P_1),(P_2,I-P_2)$ with $L$ distinct $2\times 2$ blocks in \eqref{e:halmos} with multiplicities $d_1,\dots,d_L$ and $d_0$ $1\times 1$ blocks, that cannot be self-tested by $N$-inputs and $2$-outputs.
\end{prop}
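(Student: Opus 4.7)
My plan is to transpose the real-algebraic dimension-counting argument of Proposition~\ref{p:nogo_states} from states to pairs of projections: the Schmidt coefficients $\lambda_1,\dots,\lambda_L$ are replaced with the Halmos angles $\alpha_1,\dots,\alpha_L$ of $(P_1,P_2)$ in \eqref{e:halmos}, which now serve as the continuous parameters the correlation is required to resolve.

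Fix any decomposition $d_0=m_{00}+m_{01}+m_{10}+m_{11}$ and set $n=d_0+2(d_1+\cdots+d_L)$. Let $\bA$ denote the set of $N$-input $2$-output bipartite strategies
$$\cS=\big(\ket{\psi};(P_i,I-P_i)_{i=1}^N;(Q_j,I-Q_j)_{j=1}^N\big)$$
on $\C^n\otimes\C^n$ such that $\ket{\psi}$ has full Schmidt rank and $(P_1,P_2)$ is in the canonical form \eqref{e:halmos}, with the $1\times 1$ blocks of each of the four types $(\ve_i,\ve'_i)$ grouped together in a fixed order and the $2\times 2$ blocks grouped by strictly increasing angles $0<\alpha_1<\cdots<\alpha_L<\pi/2$ with prescribed multiplicities $d_1,\dots,d_L$. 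Let $G=G_A\times\uc{n}$, where $G_A\le\uc{n}$ is the joint unitary commutant of $\{P_1,P_2\}$ enlarged by permutations of identical blocks; by a direct inspection of \eqref{e:halmos}, this can be identified with $\uc{m_{00}}\times\uc{m_{01}}\times\uc{m_{10}}\times\uc{m_{11}}\times\uc{d_1}\times\cdots\times\uc{d_L}$. Then $G$ acts on $\bA$ preserving the distinguished Halmos form. Set $\bB=\bA/G$ with quotient projection $\pi\colon\bA\to\bB$. Define $f\colon\bA\to\mtxc{n}^2$ by $f(\cS)=(P_1,P_2)$, and $g\colon\bA\to\R^{(N+1)^2-1}$ recording the correlation data $\bra{\psi}P_i\otimes Q_j\ket{\psi}$, $\bra{\psi}P_i\otimes I\ket{\psi}$ and $\bra{\psi}I\otimes Q_j\ket{\psi}$.

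Exactly as in Proposition~\ref{p:nogo_states}, the sets $\bA,\bB$ and maps $f,g$ are semialgebraic, $f$ and $g$ factor through $\pi$ as $f',g'$, and the subset $\bC\subseteq\bB$ of classes uniquely determined by $g'$ is semialgebraic; a strategy $\cS\in\bA$ self-tested by its correlation necessarily satisfies $\pi(\cS)\in\bC$. The image $f'(\bB)$ is parameterized injectively by $(\alpha_1,\dots,\alpha_L)$ ranging over the open region $\{0<\alpha_1<\cdots<\alpha_L<\pi/2\}\subseteq\R^L$, so $\dim f'(\bB)=L$. Since $g'|_\bC$ is injective, \cite[Theorem 2.8.8]{BCR} yields $\dim\bC=\dim g'(\bC)\le(N+1)^2-1$. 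The hypothesis $L+1>(N+1)^2$ forbids $f'|_\bC$ from being surjective, producing a pair $(P_1,P_2)$ of the prescribed form outside $f'(\bC)$; consequently no $N$-input $2$-output strategy containing this pair can be self-tested by its correlation.

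The only real departure from Proposition~\ref{p:nogo_states}, and thus the main technical point to verify, is the correct identification of the residual group $G_A$ as the joint unitary commutant of $(P_1,P_2)$ extended by the permutations that mix identical $2\times 2$ blocks. Once this is in place, both the semialgebraicity of the quotient $\bB$ and the factorisation of $f,g$ through $\pi$ are routine, and the dimension count is mechanically analogous to the state-side argument.
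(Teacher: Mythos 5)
Your proposal follows the same dimension-counting argument as the paper, which itself just says ``proceed analogously as in Proposition~\ref{p:nogo_states}''; your main additions are (i) explicitly identifying the residual gauge group $G$ as the joint commutant $\uc{m_{00}}\times\uc{m_{01}}\times\uc{m_{10}}\times\uc{m_{11}}\times\uc{d_1}\times\cdots\times\uc{d_L}$ on the $A$-side times the full $\uc{n}$ on the $B$-side (which is the right reading of the paper's terse ``all unitaries preserving the structure of \eqref{e:halmos}''), and (ii) observing that the Halmos angles contribute $L$ free parameters (with no normalization constraint, unlike the Schmidt coefficients), so $\dim f'(\bB)=L$ and the inequality $L>(N+1)^2-1$ is exactly what is needed. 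This matches the paper's intent.
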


\begin{proof}
We proceed analogously as in the proof of Proposition \ref{p:nogo_states}. The set $\bA$ consists of $N$-input $2$-output strategies whose first two measurements are given by projections of the form \eqref{e:halmos} with $L$ angles $\alpha_\ell$ of multiplicities $d_1,\dots,d_L$. Let $f:\bA\to \mtxr{d_0+2(d_1+\cdots+d_L)}^2$ be the projection onto the pair of projections defining the first two measurements in a strategy. The group $G$ consists of all unitaries preserving the structure of \eqref{e:halmos}. Then $g,\bB,\bC$ are defined similarly as in the proof of Proposition \ref{p:nogo_states}, and the same dimension arguments apply.
\end{proof}

\begin{appendices}

\section{Distinguished projections in low dimensions}\label{app}

As a demonstration of Remark \ref{r:recursive}, we construct $\fP^{(n)}_1,\dots \fP^{(n)}_4$ for $n\le 6$.
\\ $n=1$:  $(1), (0), (0), (0)$
\\ $n=2$:  
$$\begin{pmatrix}0&0\\0&0\end{pmatrix},
\begin{pmatrix} 1 & 0 \\ 0 & 0\end{pmatrix},
\begin{pmatrix}
 \frac{1}{4} & \frac{-\sqrt{3}}{4} \\
 \frac{-\sqrt{3}}{4} & \frac{3}{4} \\
\end{pmatrix},
\begin{pmatrix}
 \frac{1}{4} & \frac{\sqrt{3}}{4} \\
 \frac{\sqrt{3}}{4} & \frac{3}{4} \\
\end{pmatrix}$$
\\ $n=3$:  
$$
\begin{pmatrix}
 1 & 0 & 0 \\
 0 & 1 & 0 \\
 0 & 0 & 0
\end{pmatrix},
\begin{pmatrix}
0 & 0 & 0 \\
 0 & \frac{4}{9} & \frac{-2 \sqrt{5}}{9} \\
 0 & \frac{-2 \sqrt{5}}{9} & \frac{5}{9}
 \end{pmatrix},
\begin{pmatrix}
 \frac{1}{3} & \frac{1}{3 \sqrt{3}} & \frac{\sqrt{5}}{3\sqrt{3}} \\
 \frac{1}{3 \sqrt{3}} & \frac{1}{9} & \frac{\sqrt{5}}{9} \\
 \frac{\sqrt{5}}{3\sqrt{3}} & \frac{\sqrt{5}}{9} & \frac{5}{9}
 \end{pmatrix},
\begin{pmatrix}
 \frac{1}{3} & \frac{-1}{3 \sqrt{3}} & \frac{-\sqrt{5}}{3\sqrt{3}} \\
 \frac{-1}{3 \sqrt{3}} & \frac{1}{9} & \frac{\sqrt{5}}{9} \\
 \frac{-\sqrt{5}}{3\sqrt{3}} & \frac{\sqrt{5}}{9} & \frac{5}{9}
 \end{pmatrix}
$$
\\ $n=4$:  
\begin{align*}
&\begin{pmatrix}
 1 & 0 & 0 & 0 \\
 0 & 0 & 0 & 0 \\
 0 & 0 & 0 & 0 \\
 0 & 0 & 0 & 0 
\end{pmatrix},
\begin{pmatrix}
 \frac{1}{4} & 0 & \frac{-\sqrt{3}}{4} & 0 \\
 0 & 1 & 0 & 0 \\
 \frac{-\sqrt{3}}{4} & 0 & \frac{3}{4} & 0 \\
 0 & 0 & 0 & 0 
\end{pmatrix},\\
&\begin{pmatrix}
 \frac{1}{4} & \frac{-\sqrt{15}}{16} & \frac{\sqrt{3}}{8} & \frac{\sqrt{21}}{16} \\
 \frac{-\sqrt{15}}{16} & \frac{3}{8} & \frac{-3 \sqrt{5}}{16} & 0 \\
 \frac{\sqrt{3}}{8} & \frac{-3 \sqrt{5}}{16} & \frac{1}{2} & \frac{-\sqrt{7}}{16} \\
 \frac{\sqrt{21}}{16} & 0 & \frac{-\sqrt{7}}{16} & \frac{7}{8}
\end{pmatrix},
\begin{pmatrix}
 \frac{1}{4} & \frac{\sqrt{15}}{16} & \frac{\sqrt{3}}{8} & \frac{-\sqrt{21}}{16} \\
 \frac{\sqrt{15}}{16} & \frac{3}{8} & \frac{3 \sqrt{5}}{16} & 0 \\
 \frac{\sqrt{3}}{8} & \frac{3 \sqrt{5}}{16} & \frac{1}{2} & \frac{\sqrt{7}}{16} \\
 \frac{-\sqrt{21}}{16} & 0 & \frac{\sqrt{7}}{16} & \frac{7}{8}\end{pmatrix}
\end{align*}
\\ $n=5$:  
\begin{align*}
&\begin{pmatrix}
 1 & 0 & 0 & 0 & 0 \\
 0 & 1 & 0 & 0 & 0 \\
 0 & 0 & 1 & 0 & 0 \\
 0 & 0 & 0 & 0 & 0 \\
 0 & 0 & 0 & 0 & 0
\end{pmatrix},
\begin{pmatrix}
 0 & 0 & 0 & 0 & 0 \\
 0 & \frac{4}{25} & 0 & \frac{-2 \sqrt{21}}{25} & 0 \\
 0 & 0 & \frac{16}{25} & 0 & \frac{-12}{25} \\
 0 & \frac{-2 \sqrt{21}}{25} & 0 & \frac{21}{25} & 0 \\
 0 & 0 & \frac{-12}{25} & 0 & \frac{9}{25}
\end{pmatrix},\\
&\begin{pmatrix}
 \frac{2}{5} & \frac{3}{5 \sqrt{5}} & 0 & \frac{\sqrt{21}}{5\sqrt{5}} & 0 \\
 \frac{3}{5 \sqrt{5}} & \frac{8}{25} & \frac{\sqrt{7}}{25} & \frac{\sqrt{21}}{25} & \frac{3 \sqrt{7}}{25} \\
 0 & \frac{\sqrt{7}}{25} & \frac{2}{25} & \frac{-\sqrt{3}}{25} & \frac{6}{25} \\
 \frac{\sqrt{21}}{5\sqrt{5}} & \frac{\sqrt{21}}{25} & \frac{-\sqrt{3}}{25} & \frac{12}{25} & \frac{-3 \sqrt{3}}{25} \\
 0 & \frac{3 \sqrt{7}}{25} & \frac{6}{25} & \frac{-3 \sqrt{3}}{25}
 & \frac{18}{25}
 \end{pmatrix}, 
\begin{pmatrix}
 \frac{2}{5} & \frac{-3}{5 \sqrt{5}} & 0 & \frac{-\sqrt{21}}{5\sqrt{5}} & 0 \\
 \frac{-3}{5 \sqrt{5}} & \frac{8}{25} & \frac{-\sqrt{7}}{25} & \frac{\sqrt{21}}{25} & \frac{-3 \sqrt{7}}{25} \\
 0 & \frac{-\sqrt{7}}{25} & \frac{2}{25} & \frac{\sqrt{3}}{25} & \frac{6}{25} \\
 \frac{-\sqrt{21}}{5\sqrt{5}} & \frac{\sqrt{21}}{25} & \frac{\sqrt{3}}{25} & \frac{12}{25} & \frac{3 \sqrt{3}}{25} \\
 0 & \frac{-3 \sqrt{7}}{25} & \frac{6}{25} & \frac{3 \sqrt{3}}{25} & \frac{18}{25}
 \end{pmatrix}
\end{align*}
\\$n=6$:
\begin{align*}
&\begin{pmatrix}
	1 & 0 & 0 & 0 & 0 & 0 \\
	0 & 1 & 0 & 0 & 0 & 0 \\
	0 & 0 & 0 & 0 & 0 & 0 \\
	0 & 0 & 0 & 0 & 0 & 0 \\
	0 & 0 & 0 & 0 & 0 & 0 \\
	0 & 0 & 0 & 0 & 0 & 0 
\end{pmatrix},
\begin{pmatrix}
	\frac{1}{9} & 0 & 0 & \frac{-2 \sqrt{2}}{9} & 0 & 0 \\
	0 & \frac{4}{9} & 0 & 0 & \frac{-2 \sqrt{5}}{9} & 0 \\
	0 & 0 & 1 & 0 & 0 & 0 \\
	\frac{-2 \sqrt{2}}{9} & 0 & 0 & \frac{8}{9} & 0 & 0 \\
	0 & \frac{-2 \sqrt{5}}{9} & 0 & 0 & \frac{5}{9} & 0 \\
	0 & 0 & 0 & 0 & 0 & 0 \end{pmatrix}, \\
&\begin{pmatrix}
	\frac{13}{36} & \frac{1}{4 \sqrt{3}} & \frac{-\sqrt{\frac{35}{3}}}{12} & \frac{\sqrt{2}}{9} & \frac{\sqrt{\frac{5}{3}}}{4} & 0 \\
	\frac{1}{4 \sqrt{3}} & \frac{7}{36} & 0 & \frac{-1}{4 \sqrt{6}} & \frac{\sqrt{5}}{9} & \frac{\sqrt{\frac{55}{6}}}{12} \\
	\frac{-\sqrt{\frac{35}{3}}}{12} & 0 & \frac{5}{12} & \frac{-\sqrt{\frac{35}{6}}}{6} & 0 & 0 \\
	\frac{\sqrt{2}}{9} & \frac{-1}{4 \sqrt{6}} & \frac{-\sqrt{\frac{35}{6}}}{6} & \frac{17}{36} & \frac{-\sqrt{\frac{5}{6}}}{4} & 0 \\
	\frac{\sqrt{\frac{5}{3}}}{4} & \frac{\sqrt{5}}{9} & 0 & \frac{-\sqrt{\frac{5}{6}}}{4} & \frac{23}{36} & \frac{-\sqrt{\frac{11}{6}}}{12} \\
	0 & \frac{\sqrt{\frac{55}{6}}}{12} & 0 & 0 & \frac{-\sqrt{\frac{11}{6}}}{12} & \frac{11}{12}\end{pmatrix},
\begin{pmatrix}
	\frac{13}{36} & \frac{-1}{4 \sqrt{3}} & \frac{\sqrt{\frac{35}{3}}}{12} & \frac{\sqrt{2}}{9} & \frac{-\sqrt{\frac{5}{3}}}{4} & 0 \\
	\frac{-1}{4 \sqrt{3}} & \frac{7}{36} & 0 & \frac{1}{4 \sqrt{6}} & \frac{\sqrt{5}}{9} & \frac{-\sqrt{\frac{55}{6}}}{12} \\
	\frac{\sqrt{\frac{35}{3}}}{12} & 0 & \frac{5}{12} & \frac{\sqrt{\frac{35}{6}}}{6} & 0 & 0 \\
	\frac{\sqrt{2}}{9} & \frac{1}{4 \sqrt{6}} & \frac{\sqrt{\frac{35}{6}}}{6} & \frac{17}{36} & \frac{\sqrt{\frac{5}{6}}}{4} & 0 \\
	\frac{-\sqrt{\frac{5}{3}}}{4} & \frac{\sqrt{5}}{9} & 0 & \frac{\sqrt{\frac{5}{6}}}{4} & \frac{23}{36} & \frac{\sqrt{\frac{11}{6}}}{12} \\
	0 & \frac{-\sqrt{\frac{55}{6}}}{12} & 0 & 0 & \frac{\sqrt{\frac{11}{6}}}{12} & \frac{11}{12}
\end{pmatrix}
\end{align*}

To obtain $\fQ_{n,r}$, one computes $\fQ_{n,r}=\sum_{i=1}^r \ket{e_i}\!\bra{e_i}$ where $\ket{e_i}$ are unit eigenvectors of $\fP^{(n)}_3+\fP^{(n)}_4$ corresponding to the $r$ smallest eigenvalues in increasing order. Examples for $r< n\le 5$ are given below.
\\ $n=2$, $r=1$:
$$\begin{pmatrix}1&0 \\ 0& 0\end{pmatrix}$$
\\ $n=3$, $r=1,2$:
$$\begin{pmatrix}
 0 & 0 & 0 \\
 0 & \frac{5}{6} & \frac{-\sqrt{5}}{6} \\
 0 & \frac{-\sqrt{5}}{6} & \frac{1}{6}
\end{pmatrix},
\begin{pmatrix}
 1 & 0 & 0 \\
 0 & \frac{5}{6} & \frac{-\sqrt{5}}{6} \\
 0 & \frac{-\sqrt{5}}{6} & \frac{1}{6}
\end{pmatrix}$$
\\ $n=4$, $r=1,2,3$:
$$\begin{pmatrix}
 \frac{3}{4} & 0 & \frac{-\sqrt{3}}{4} & 0 \\
 0 & 0 & 0 & 0 \\
 \frac{-\sqrt{3}}{4} & 0 & \frac{1}{4} & 0 \\
 0 & 0 & 0 & 0
\end{pmatrix},
\begin{pmatrix}
 \frac{3}{4} & 0 & \frac{-\sqrt{3}}{4} & 0 \\
 0 & 1 & 0 & 0 \\
 \frac{-\sqrt{3}}{4} & 0 & \frac{1}{4} & 0 \\
 0 & 0 & 0 & 0
\end{pmatrix},
\begin{pmatrix}
 1 & 0 & 0 & 0 \\
 0 & 1 & 0 & 0 \\
 0 & 0 & 1 & 0 \\
 0 & 0 & 0 & 0
\end{pmatrix}$$
\\ $n=5$, $r=1,2,3,4$:
\begin{align*}
&\begin{pmatrix}
 0 & 0 & 0 & 0 & 0 \\
 0 & 0 & 0 & 0 & 0 \\
 0 & 0 & \frac{9}{10} & 0 & \frac{-3}{10} \\
 0 & 0 & 0 & 0 & 0 \\
 0 & 0 & \frac{-3}{10} & 0 & \frac{1}{10} 
\end{pmatrix},
\begin{pmatrix}
 0 & 0 & 0 & 0 & 0 \\
 0 & \frac{7}{10} & 0 & \frac{-\sqrt{21}}{10} & 0 \\
 0 & 0 & \frac{9}{10} & 0 & \frac{-3}{10} \\
 0 & \frac{-\sqrt{21}}{10} & 0 & \frac{3}{10} & 0 \\
 0 & 0 & \frac{-3}{10} & 0 & \frac{1}{10}
\end{pmatrix},\\
&\begin{pmatrix}
 1 & 0 & 0 & 0 & 0 \\
 0 & \frac{7}{10} & 0 & \frac{-\sqrt{21}}{10} & 0 \\
 0 & 0 & \frac{9}{10} & 0 & \frac{-3}{10} \\
 0 & \frac{-\sqrt{21}}{10} & 0 & \frac{3}{10} & 0 \\
 0 & 0 & \frac{-3}{10} & 0 & \frac{1}{10}
\end{pmatrix},
\begin{pmatrix}
 1 & 0 & 0 & 0 & 0 \\
 0 & 1 & 0 & 0 & 0 \\
 0 & 0 & \frac{9}{10} & 0 & \frac{-3}{10} \\
 0 & 0 & 0 & 1 & 0 \\
 0 & 0 & \frac{-3}{10} & 0 & \frac{1}{10}
\end{pmatrix}
\end{align*}
\\ $n=6$, $r=1,2,3,4,5$:
\begin{align*}
&\begin{pmatrix}
	0 & 0 & 0 & 0 & 0 & 0 \\
	0 & \frac{5}{6} & 0 & 0 & \frac{-\sqrt{5}}{6} & 0 \\
	0 & 0 & 0 & 0 & 0 & 0 \\
	0 & 0 & 0 & 0 & 0 & 0 \\
	0 & \frac{-\sqrt{5}}{6} & 0 & 0 & \frac{1}{6} & 0 \\
	0 & 0 & 0 & 0 & 0 & 0
\end{pmatrix},
\begin{pmatrix}
	\frac{2}{3} & 0 & 0 & \frac{-\sqrt{2}}{3} & 0 & 0 \\
	0 & \frac{5}{6} & 0 & 0 & \frac{-\sqrt{5}}{6} & 0 \\
	0 & 0 & 0 & 0 & 0 & 0 \\
	\frac{-\sqrt{2}}{3} & 0 & 0 & \frac{1}{3} & 0 & 0 \\
	0 & \frac{-\sqrt{5}}{6} & 0 & 0 & \frac{1}{6} & 0 \\
	0 & 0 & 0 & 0 & 0 & 0
\end{pmatrix},\\
&\begin{pmatrix}
	\frac{2}{3} & 0 & 0 & \frac{-\sqrt{2}}{3} & 0 & 0 \\
	0 & \frac{5}{6} & 0 & 0 & \frac{-\sqrt{5}}{6} & 0 \\
	0 & 0 & 1 & 0 & 0 & 0 \\
	\frac{-\sqrt{2}}{3} & 0 & 0 & \frac{1}{3} & 0 & 0 \\
	0 & \frac{-\sqrt{5}}{6} & 0 & 0 & \frac{1}{6} & 0 \\
	0 & 0 & 0 & 0 & 0 & 0
\end{pmatrix},
\begin{pmatrix}
	1 & 0 & 0 & 0 & 0 & 0 \\
	0 & \frac{5}{6} & 0 & 0 & \frac{-\sqrt{5}}{6} & 0 \\
	0 & 0 & 1 & 0 & 0 & 0 \\
	0 & 0 & 0 & 1 & 0 & 0 \\
	0 & \frac{-\sqrt{5}}{6} & 0 & 0 & \frac{1}{6} & 0 \\
	0 & 0 & 0 & 0 & 0 & 0
\end{pmatrix},
\begin{pmatrix}
	1 & 0 & 0 & 0 & 0 & 0 \\
	0 & 1 & 0 & 0 & 0 & 0 \\
	0 & 0 & 1 & 0 & 0 & 0 \\
	0 & 0 & 0 & 1 & 0 & 0 \\
	0 & 0 & 0 & 0 & 1 & 0 \\
	0 & 0 & 0 & 0 & 0 & 0
\end{pmatrix}
\end{align*}

\end{appendices}

\bibliographystyle{abbrv}
\bibliography{selftest}

\end{document}